\newcommand{\cA}{\mathcal{A}}\newcommand{\cC}{\mathcal{C}}\newcommand{\cD}{\mathcal{D}}
\newcommand{\cO}{\mathcal{O}}\newcommand{\cP}{\mathcal{P}}
\newcommand{\cS}{\mathcal{S}}
\newcommand{\bxi}{\boldsymbol \xi}
\newtheorem{myth}{Theorem}
\newtheorem{myle}[myth]{\bf Lemma}
\begin{document}

%\title{On Cost Minimization for Wireless Digital Twin Networks (WDTNs): A Sustainability-aware Placement Algorithm}
\title{Sustainable Placement with Cost Minimization in Wireless Digital Twin Networks}
\author{
{Yuzhi Zhou,~\IEEEmembership{Student Member,~IEEE}, Yaru Fu,~\IEEEmembership{Member,~IEEE},~Zheng Shi,~\IEEEmembership{Member,~IEEE}\\
~ Kevin Hung,~\IEEEmembership{Senior Member,~IEEE},~Tony Q. S. Quek,~\IEEEmembership{Fellow,~IEEE},~Yan Zhang,~\IEEEmembership{Fellow,~IEEE}}
\thanks{
This work was supported in part by the Hong Kong Research Matching Grant (RMG) in the Central Pot under Project No. CP/2022/2.1, in part by the Research and Development Fund (R\&D Fund) under reference No. RD/2023/1.8, in part by the Team-based Research Fund under Project No. TBRF/2024/1.10, in part by the National Natural Science Foundation of China under Grant 62171200, in part by Guangdong Basic and Applied Basic Research Foundation under Grant 2023A1515010900, in part by the National Research Foundation, Singapore and Infocomm Media Development Authority under its Future Communications Research $\&$ Development Programme, and in part by the EU Horizon 2020 Research and Innovation Programme under the Marie Sklodowska-Curie grant agreement No. 101008297. This article reflects only the authors' view. The European Union Commission is not responsible for any use that may be made of the information it contains. \emph{(Corresponding author: Yaru Fu)}
% in part by the grant from the Research Grants Council (RGC) of the Hong Kong Special Administrative Region, China, under Project Reference No. UGC/FDS16/E09/21 and Project Reference No. UGC/FDS16/E02/22,

Y. Zhou, Y. Fu, and K. Hung are with the School of Science and Technology,  Hong Kong Metropolitan University, Hong Kong, 999077, China (e-mail: s1315400@live.hkmu.edu.hk; yfu@hkmu.edu.hk; khung@hkmu.edu.hk).

Z. Shi is with the School of Intelligent Systems Science and Engineering, Jinan University, Zhuhai 519070, China (e-mail: zhengshi@jnu.edu.cn).

Tony Q. S. Quek is with the Singapore University of Technology and Design, Singapore 487372 (e-mail: tonyquek@sutd.edu.sg).

Y. Zhang is with the Department of Informatics, University of Oslo (e-mail: yanzhang@ieee.org).
}
%\thanks{Manuscript received April 19, 2021; revised August 16, 2021.}
}

\markboth{}%
{Shell \MakeLowercase{\textit{et al.}}: Bare Demo of IEEEtran.cls for IEEE Journals}
\maketitle

\begin{abstract}
Digital twin (DT) technology has a high potential to satisfy different requirements of the ever-expanding new applications. Nonetheless, the DT placement in wireless digital twin networks (WDTNs) poses a significant challenge due to the conflict between unpredictable workloads and the limited capacity of edge servers. In other words, each edge server has a risk of overload when handling an excessive number of tasks or services. Overload risks can have detrimental effects on a network's sustainability, yet this aspect is often overlooked in the literature.
In this paper, we aim to study the sustainability-aware DT placement problem for WDTNs from a cost minimization perspective.
%address the problem of minimizing costs in DT placement within WDTNs while considering sustainability control for edge servers. To address this,
To this end, we formulate the DT placement-driven cost optimization problem as a chance-constrained integer programming problem.
%However, solving this problem is challenging due to its non-convexity and the difficulty of feasibility checking.
%To overcome these difficulties,
For tractability, we transform the original non-deterministic problem into a deterministic integer linear programming (ILP) problem using the sample average approximation (SAA) approach. We prove that the transformed problem remains NP-hard and thus finding a global optimal solution is very difficult. To strike a balance between time efficiency and performance guarantee, we propose an improved local search algorithm for this ILP by identifying high-quality starting states from historical search data and enhancing the search process. Numerical results show a lower cost and higher efficiency of our proposed method compared with the previous schemes.
\end{abstract}

\begin{IEEEkeywords}
Cost minimization, digital twin, placement, sample average approximation, sustainability control, time efficient algorithm.
\end{IEEEkeywords}

\section{Introduction}
\label{sec1}

The development of the sixth-generation (6G) mobile network has a high potential for the emerging Internet-of-Everything (IoE) applications \cite{DT_Magzine_ZangYan}. These applications, such as extended reality (XR), intelligent transportation, and haptic technologies, play a crucial role in advancing modern society. However, the successful implementation of these applications requires meeting specific critical requirements. These requirements may include ubiquitous and robust instant connectivity, extremely low latency, and enhanced edge intelligence capability of efficiently managing vast amounts of data \cite{DT_Architecture1}. To address these tremendous challenges, it is of high necessity to develop a self-sustaining paradigm and a proactive online-learning-based wireless system \cite{DT_Magzine}. In this context, the wireless digital twin network (WDTN), an innovative architecture based on the digital twin (DT) technique, provides a promising solution to meet these challenges \cite{DT_Architecture1}. Therein, a DT represents the digital replica of a physical entity, which is typically created at edge or cloud servers using historical data and real-time operational information. Through continuous interaction and synchronization with the physical counterpart, DT enables close monitoring and optimization of the physical system \cite{DT_Architecture1}. By harnessing these properties, WDTN has the potential to bridge the gap between physical systems and digital spaces, aligning with the goals envisioned by the 6G network.

%Given its capabilities,
It is not beyond our expectation that WDTN has attracted significant attention from researchers as a promising approach to address various challenges \cite{DT_Migration3,DT_Zhangyan5,DT_Zhangyan4,DT_Zhangyan2,DT_application2,DT_MEC1,DT_MEC2,DT_MEC3}.
%\textcolor{blue}{In} \cite{DT_Migration3}, DT-enabled mobile edge computing (MEC) networks are proposed to solve the service migration problem, where DTs are used to predict future traffic demands.
In \cite{DT_Migration3}, X. Chen \emph{et al.} proposed DT-enabled mobile edge computing (MEC) networks to solve the service migration problem, where DTs were used to predict future traffic demands.
A distributed traffic prediction approach was proposed, and its performance was evaluated in a real-world mobile dataset. Based on this approach, a cooperative coefficient migration algorithm was used to reduce the migration cost and enhance the quality of service (QoS). In \cite{DT_Zhangyan5}, Y. Dai \emph{et al.} integrated DTs into the vehicular network to adaptively manage offloading policies and minimize offloading latency. The DTs were created by capturing and recording the essential features and real-time status of physical entities. To address the offloading problem, deep reinforcement learning (DRL) was employed, utilizing the DT observations as inputs for decision-making.
%Likewise, in \cite{DT_Zhangyan4}, a lightweight DT-empowered air-ground network architecture is proposed to tackle the challenges of learning efficiency in aerial networks.
Likewise, in \cite{DT_Zhangyan4}, W. Sun \emph{et al.} proposed a lightweight DT-empowered air-ground network architecture to tackle the challenges of learning efficiency in aerial networks.
Building upon this framework, a learning efficiency maximization problem was formulated. Afterward, an incentive mechanism was proposed to solve this problem. Additionally, a scenario was presented in \cite{DT_Zhangyan2} by them where drones worked as aggregators in a federated learning (FL) network. Similar to the approach described in \cite{DT_Zhangyan4}, DTs were utilized to capture the essential features of ground-based physical devices and drones. Both static and dynamic incentives were employed to adjust the round of global updates to enhance the learning efficiency in these scenarios.
In \cite{DT_application2}, L. Lei \emph{et al.} used DTs to establish an unmanned aerial vehicle (UAV) swarm intelligent network. A case study was presented, demonstrating how a UAV swarm utilized DTs for synchronization and cooperation in tracking multiple vessel targets within marine scenarios.
Furthermore, in \cite{DT_MEC1}, W. Sun \emph{et al.} employed DTs to describe the features of physical devices, facilitating the FL processes of industrial WDTN. The frequency of FL aggregation was optimized to minimize the loss value of the FL procedure. To solve this optimization problem, DRL techniques were utilized.
In \cite{DT_MEC2} and \cite{DT_MEC3}, Y. Lu \emph{et al.} proposed architectures of DT-driven edge networks to strengthen communication efficiency. The combination of blockchain and FL was used for DT implementation.
Specifically, in \cite{DT_MEC2}, they adjusted the offloading policies of physical devices to minimize the loss function while enhancing the efficiency of the WDTN. They adopted DRL to solve the optimization problem with network features as observations. Whilst, in \cite{DT_MEC3}, they optimized the subcarrier allocation of the model transmission, as well as the participating of global aggregation and its corresponding time slot, to minimize the computation latency. A heuristic algorithm was proposed to solve the subcarrier allocation problem.

%{\color{blue}I made the beginning for the coming sentences, please do the revisions accordingly.}

%There are indeed other research studies that focus on privacy concerns \cite{DT_place2,DT_Zhangyan3} for WDTNs.
Several studies focus on privacy concerns in WDTNs.
In \cite{DT_Zhangyan3}, L. Jiang \emph{et al.} defined an optimization problem to investigate the trade-off between time consumption and learning accuracy in the FL process within a WDTN. A directed acyclic graph (DAG) blockchain-based method was adopted to establish the model update chain, enhancing security and privacy preservation during DT construction.
%More specifically, in \cite{DT_place2}, the authors address a DT offloading problem where DTs perform model training at the edge using FL. The model utilizes a permission blockchain to enhance system security and data privacy. A multi-agent DRL algorithm is proposed to find a solution that balances learning accuracy and time cost.
Similarly, in \cite{DT_place2}, Y. Lu \emph{et al.} addressed a DT offloading problem where DTs performed model training at the edge using FL. To enhance system security and data privacy, they utilized a permission blockchain. A multi-agent DRL algorithm was proposed to find a solution that balances learning accuracy and time cost.
%In \cite{DT_Zhangyan3}, an optimization problem is introduced to strike a trade-off between time consumption and learning accuracy in the FL process within a WDTN. The model adopts a directed acyclic graph (DAG) blockchain-based method to establish the model update chain, enhancing security and privacy preservation during DT construction.
On the other hand, the issue of DT association, which includes DT migration and DT placement, is a critical challenge that significantly impacts the performance of WDTNs. Several approaches have been explored to tackle the migration aspect of the association problem, as discussed in
%On the other hand, the issue of DT association, which encompasses DT migration and DT placement, is a critical challenge that significantly impacts the performance of WDTNs. This challenge arises from the need to address both the reduction of system latency and the assurance of energy efficiency. Several approaches have been explored to tackle the migration aspect of the association problem
\cite{DT_Migration1,6G_DigitalTwin,DT_Zhangyan1}.
In \cite{DT_Migration1}, Q. Liu \emph{et al.} studied a migration problem where DTs were utilized to characterize network elements and network slicing. A distributed approximating policy with the prediction of resource requirements was adopted to minimize the weighted ratio of energy consumption and network loading in this problem.
In \cite{6G_DigitalTwin}, Y. Lu \emph{et al.} proposed a DT migration problem in DT-enabled MEC networks. They employed DRL to optimize DT migration, aiming to minimize the overall system latency while satisfying the latency constraints of physical devices.
Furthermore, in \cite{DT_Zhangyan1}, W. Sun \emph{et al.} proposed an offloading problem in the WDTN, where service migration occurs when physical devices alter their target edge servers. The objective was to minimize the offloading delay associated with the migration by adjusting the target edge server of the physical device. A DRL-based offloading algorithm was employed to solve this problem.
%Another crucial issue in WDTNs is DT placement. Regarding DT placement, current research primarily focuses on optimizing network delay and energy efficiency.
%Indeed, there are additional research efforts that focus on DT placement to optimize network delay \cite{DT_place1} and energy efficiency \cite{DT_Social}.

While the literature discussed above demonstrates the effectiveness of WDTNs, it is essential to acknowledge their practical limitations. The sustainability of DT systems is crucial, especially for systems operated by edge servers. Specifically, it plays a pivotal role in defining operational excellence and long-term viability, as this criterion reflects the quality of continuous interaction between the physical and digital spaces.
%To address these challenges,
%It is known that,
%In response to this,
%A well-designed DT placement scheme can ensure that the digital plane receives high-quality infrastructure support, thereby enhancing system sustainability.
%To achieve this, several researchers have focused on the DT placement problem as a means to improve system sustainability.
In response to this critical need, it's important to design a new DT placement scheme. Through the deployment of such a scheme, organizations can ensure that the digital plane receives top-tier infrastructure support, consequently enhancing the overall sustainability of the system. To achieve this, several researchers have focused on the DT placement problem to improve system sustainability.
%\textcolor{blue}{Indeed, there have been additional research efforts focusing on the DT placement problem to enhance the sustainability of the system.
In \cite{DT_place1}, M. Vaezi \emph{et al.} addressed the DT placement problem to minimize the maximum delay in data request responses, thereby improving system sustainability. They formulated this problem as an integer quadratic program (IQP) and proposed a polynomial time approximation algorithm.
Furthermore, in \cite{DT_place3}, D. Wang \emph{et al.} proposed a joint optimization problem that included DT placement and resource allocation in sustainable computing networks. They aimed to minimize system delay and energy consumption while ensuring system reliability. To tackle this problem, they employed a DRL-based algorithm.
Another DT placement problem was formulated in \cite{DT_place4}, which focused on guaranteeing data freshness in the WDTN to enhance system sustainability. This problem was solved by maximizing a utility metric constructed using the concept of age of information (AoI). J. Li \emph{et al.} utilized an approximation algorithm to address this problem.
Besides, J. Li \emph{et al.} also formulated a DT placement problem for minimizing system delay in \cite{DT_place5} to improve system sustainability by incorporating serverless technology. An online algorithm was proposed to guarantee the system's performance.
%In \cite{DT_Social}, O. Chukhno \emph{et al.}
%consider the social behavior of DTs as an additional feature that introduces constraints in the placement problem.
%The problem is then formulated as a quadratic assignment problem (QAP), and the authors propose a graph-based heuristic algorithm to generate effective solutions.

%\textcolor{blue}{Although the impact of delay control on sustainability has been investigated, the consideration of system overload in this context is commonly lacking. }
Although the above research investigates the effect of placement on the sustainability control of WDTNs, the consideration of system overload in this context is commonly lacking.
Specifically, placing a large number of DT components on the same edge server can lead to CPU overload, particularly when the load of each component cannot be accurately estimated.
This uncertainty adds complexity to the placement decision-making problem. To be more specific, the risks of overload are challenging to eliminate as the network continues to operate.
Ultimately, this risk can result in placement failures within the system, thereby hindering the network's sustainability.
%In specific,
%%delay control and
%overload control can significantly impact the sustainability of these systems.
The most typical situation is that too many components are handled by the same edge server, resulting in an overloaded CPU, which further slows down the computing for each component or eventually fails the computation.
%In reality, exceeding the computation budget will result in serious consequences, including increased failure risk as well as network operational costs, and impaired system sustainability.
Moreover, the cost of edge server downtime can be painful for industries when edge servers are relied on to run their businesses \cite{whyNotComputation1,whyNotComputation2}.
%\textcolor{blue}{It is also worth noting that the sustainability of the DT system differs from that of conventional MEC networks \cite{riskAware3}, as the latter typically focuses on controlling the potential requirements for QoS enhancement.}
It is worth noting that the stochastic formulation in the DT system, particularly for sustainability consideration, differs from the stochastic part of conventional MEC networks \cite{riskAware3,SAA_MEC_placement1,SAA_MEC_placement2}, where the latter typically focus on QoS enhancement or consumer behavior estimation.
In particular,
in \cite{riskAware3}, H. Badri \emph{et al.} studied an application placement problem with the consideration of the uncertainty of future QoS in the system. The objective function, which was maximized by current and expected QoS, was optimized by employing the sample average approximation (SAA) method and a graph-based placement algorithm.
Furthermore, in \cite{SAA_MEC_placement1}, Z. Ning \emph{et al.} proposed a service placement problem by considering the uncertain nature of user mobility. The authors adopted the Lyapunov and SAA methods to decouple the long-term optimization problem and relax the constraints of service execution time, followed by proposing a dynamic service placement scheme.
Besides, in \cite{SAA_MEC_placement2}, another service placement problem was formulated by H. Zhao \emph{et al.}, where the stochastic property of mobile devices on the service composition scheme was considered. The SAA method was utilized to make the problem trackable, and the authors employed a genetic algorithm to minimize the response time during the service placement.

To address challenges brought by the sustainability placement in DT systems, it becomes crucial to propose a method that incorporates sustainability considerations when operating a WDTN, ensuring that the failure risk of the network, while possibly inevitable, remains controllable.
In other words, this risk should be kept below an acceptable threshold. These observations serve as a motivation for our work. %In other words, in a random WDTN scenario, the decision-making process should
In this paper, we aim to consider sustainability while operating WDTNs. Our goal is to minimize the overall cost of DT placement while taking into account the constraint of restricting the expected probability of overload per edge server within a pre-defined range.
The main contributions of this paper are summarized as follows:
\begin{itemize}
  \item We consider a WDTN, where edge servers are responsible for the placement of DT components and performing the associated computations. Considering the limited computational capabilities of edge servers, we formulate the overall cost minimization-oriented placement model as a chance-constrained problem. Therein, a probability function is one of the constraints to ensure network sustainability, i.e., restricting the probability of overload per edge server below a pre-set threshold.
  \item The formulated minimization problem is challenging to solve. The main difficulties stem from its non-convexity and the feasibility checking issue. To overcome the difficulty of evaluating solution feasibility in the original chance-constrained problem, we employ the SAA method to transform the original problem into a deterministic integer linear programming (ILP) formulation. This transformation simplifies the evaluation process and enables us to explore feasible solutions efficiently. Moreover, we demonstrate that a feasible solution obtained from solving the transformed problem has a high probability of also being a feasible solution for the original problem.
   %The chance-constrained problem is hard to evaluate its candidate solutions. Therefore, we transform it into an integer linear programming (ILP) using the sample average approximation (SAA) method. We demonstrate that the feasible solution of the transformed problem has a high probability of being a feasible solution for the original problem.
  \item In regards to tackling ILP problems, local search-enabled methods are commonly used for efficient solution search. However, the performance of these methods heavily relies on the chosen starting point. Thereby,  initialization is of high necessity to avoid becoming stuck in local optima. To address this issue, we propose an improved local search algorithm. This approach serves as an intellectual restarting local-search algorithm that leverages insights from historical trajectories to learn improved starting points. By doing so, the algorithm aims to find high-quality near-optimal solutions to the transformed ILP problem, enhancing the overall efficiency of the optimization process.
  %Extensive numerical simulations are conducted to evaluate the performance of our proposed algorithm.
   %To efficiently solve the transformed problem, we propose an improved local search algorithm. This method can be seen as an intellectual restarting Hillclimbing algorithm that learns better starting states based on historical trajectories, aiming to find a high-quality near-optimal solution.

  %demonstrate the superiority of the obtained near-optimal solutions when compared to other feasible algorithms, while the search time of the proposed algorithm is comparable with other algorithms.
\end{itemize}

At last, extensive numerical simulations are conducted to evaluate the performance of our proposed algorithm. The simulation results show that our developed strategy can converge within several iterations. The results also demonstrate the superiority of our algorithm compared to various baselines in terms of cost savings. Notably, despite achieving superior solutions, the total number of the searched states of our proposed algorithm remains comparable to (or less than) that of the benchmark strategies. These findings further validate the efficacy and efficiency of the devised approach, highlighting its potential for practical implementation in real-world scenarios.
The remainder of this paper is organized as follows: In Section \ref{sec2}, the system model of our considered WDTN is presented, followed by the problem formulation. In Section \ref{sec3}, we elaborate on the problem transformation and the concrete complexity analysis of this transformed problem. With the preliminaries mentioned above, the algorithm design for the sustainability-aware DT placement problem is given in Section \ref{sec4}. In Section \ref{sec5}, numerical results are presented to show the validity of our proposed time-efficient decision-making algorithm. Finally, we summarize this work and predict future research directions in Section \ref{sec6}. The main notations used throughout the paper are summarized in Table I.

\begin{table}
  \footnotesize
  \centering
  \caption{List of notations}
  \begin{tabular}{@{}ll@{}}
    \toprule
    Notation & Definition \\
    \midrule
    $S$                                     & Number of edge servers       \\
    $D$                                     & Number of physical devices    \\
    %$\cS$                                   & Index set of edge servers    \\
    %$\cD$                                   & Index set of physical devices    \\
    %$s$                                     & Edge server $s$ \\
    %$d$                                     & Physical device $d$ \\
    $m_s$                                   & Required cost per CPU cycle of edge server $s$ \\
    $C_d$                                   & Components number that constitute the DT of physical \hfill\\&device $d$    \\
    %$\cC_d$                                 & Index set of components that constitute the DT of physical \hfill\\&device $d$    \\
    $n_c^d$                                 & Required CPU cycles of the $c$-th component in $\cC_d$   \\
    $h_c^d$                                 & Bit size of the $c$-th component in $\cC_d$   \\
    $g_{cc'}^d$                             & Number of bits that need to be exchanged between \hfill\\&components $c$ and $c'\in\cC_d$    \\
    $x_{sc}^{d}$                            & Placement indicator with regard to the $c$-th component \hfill\\&of DT $d$    \\
    %$\bm{x}_{d}$                            & Index set of placement indicator with regard to DT $d$    \\
    $e_{s}^{d}$                             & Manhattan distance between edge server $s$ and physical \hfill\\&device $d$    \\
    $r$                                     & Cost of transmitting one KB of data over one meter of \hfill\\&distance    \\
    $\delta_{sc}^{d}$                       & Cost of offloading for transmitting $c$ of DT $d$ to edge \hfill\\&server $s$    \\
    $l_{ss'}$                               & Manhattan distance between $s$ and $s'\in\cS$    \\
    $\tau_{ss'cc'}^{d}$                     & Cost of communicating between $c$ and $c'$ for DT $d$     \\
    $\gamma_{sc}^{d}$                       & Computation cost produced by the $c$-th component of DT $d$ \hfill\\&at edge server $s$    \\
    $\alpha$                                & Risk factor of the network in $\cP_{1}$   \\
    $\text{Cost}_{\text{off}}$              & Total offloading cost    \\
    $y_{ss'cc'}^{d}$                        & Communication indicator between $c$ and $c'$ for DT $d$    \\
    $\text{Cost}_{\text{com}}$              & Total communicating cost    \\
    $\varrho$                               & Overall cost    \\
    $\bxi$                                  & Index set of observed samples of $\cP_{1}$    \\
    $\Theta$                                & Size of the index set $\bxi$  \\
    $\theta$                                & Observation sample of $\cP_{1}$    \\
    $\widetilde{n}_{c}^{d\theta}$           & Required CPU cycles of the $c$-th component in $\cC_d$ in the \hfill\\&sample $\theta$    \\
    $\widetilde{\gamma}_{sc}^{d\theta}$     & Observed computation cost produced by the $c$-th component \hfill\\&of DT $d$ at edge server $s$ in the sample $\theta$    \\
    $\text{Cost}_{\text{comp},s}^{\theta}$  & Observed total computation cost in sample $\theta$   \\
    %$G_{s}(\bm{x},\theta)$                  & Remaining computational capability of $s$ in the sample $\theta$   \\
    %$P_{s}(\bm{x})$                         & Proportion of overload in the samples    \\
    $\varepsilon$                           & Risk factor of the network in $\cP_{2}$    \\
    $P^{\Theta}_{\alpha,\varepsilon}$       & Approximating success probability    \\
    %$\bm{z}$                                & Index set of overload indicator   \\
    $z_{s\theta}$                           & Overload indicator for edge server $s$ in sample $\theta$   \\
    $\bm{a}$                                & Index set of the network state   \\
    $\pi$                                   & Searching policy   \\
    $t$                                     & Iteration times of the searching policy \\
    $\varrho_{t,Q_t}$                       & Local optimum for the searching policy at the $t$-th iteration \\
    $Q_t$                                   & Times that the searching policy converge at $t$-th iteration   \\
    $\Omega_t$                              & Set of data pairs concluding network states and the local \hfill\\&optimum of the $t$-th iteration  \\
    $V$                                     & Mapping between the starting states and their local optima    \\
    %$\text{Dist}_{\text{off}}$              & Feature value regarding offloading cost for the proposed \hfill\\&search algorithm   \\
    %$\text{Dist}_{\text{com}}$              & Feature value regarding communication cost for the proposed \hfill\\&search algorithm    \\
    $\Delta$                                & Convergence bound for the proposed search algorithm   \\
    \bottomrule
  \end{tabular}
%\label{Tab1}
\end{table}

\section{System Model and Problem Formulation}
\label{sec2}

In this section, we first introduce the system model of WDTNs. Then, we formulate the cost minimization-oriented placement problem with the consideration of sustainability control.

\subsection{System Model}

We consider a WDTN that consists of $S$ edge servers, wherein the edge servers are used to assist $D$ physical devices with their DTs realization. Denote by $\cS = \{1,2,\ldots,S\}$ and $\cD= \{1,2,\ldots,D\}$ the index sets of all edge servers and physical devices, respectively. %Locations of $s\in\cS$ and $d\in\cD$ are subject to uniform distribution.
Aligning with the settings of practical systems, it is presumed that the edge servers are heterogeneous. More specifically, they have distinct computing costs and computation capacities.
To characterize these heterogeneous features, we use $m_s$ to represent the required cost per CPU cycle of edge server $s$, where $s\in\cS$.
%we use $T_s$ to represent the computational capability of edge server $s$, which can be interpreted as the maximum cost that edge server $s$ can tolerate for DTs realization in each placement.
Similarly, let $T_s$ be the computational capability of edge server $s$, which can be interpreted as the maximum cost that edge server $s$ can tolerate for DTs realization in each placement.
%Similarly, let $m_s$ be the required cost per CPU cycle of edge server $s$, where $s\in\cS$.
However, due to the limited computational capability of each edge server, placing complex DT tasks on a target edge server can lead to heavy workloads and ultimately render the placement unachievable. To tackle this challenge, our model leverages network function visualization orchestration (NFVO) technology \cite{NFVO}, which allows for the separation of each DT into multiple components. More specifically,
%In regard to the DT per physical device, we consider that each DT is separable, and contains multiple components.
%That is because if it is not, some of the complex DT tasks would cause heavy workloads to the target edge server. Thus, by using network functions visualization orchestration (NFVO) technology \cite{NFVO}, DT $d\in\cD$ is
for $d\in\cD$, let $C_d$ be the number of the components that constitute the DT of physical device $d$. Denote by $\cC_d = \{1,2,\ldots,C_d\}$ the associated index set. A three-dimensional tuple is then used to capture each component $c\in\cC_d$, referred to as $(n_c^d, h_c^d, g_{cc'}^d)$. Thereof, $n_c^d$ and $h_c^d$ represent the required CPU cycles and the bit size of the $c$-th component in $\cC_d$, respectively. In addition, $g_{cc'}^d$ is the number of bits that need to be exchanged between components $c$ and $c'\in\cC_d$ \cite{placementModel}.

\begin{figure}
    \centering
    \includegraphics[width=2.8in]{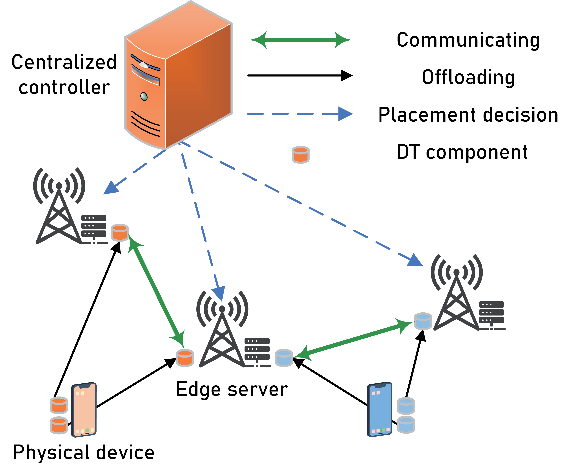}
    \caption{System model of WDTNs.}
    \label{fig1}
\end{figure}

Fig. \ref{fig1} illustrates the detailed DT placement processes in WDTNs, which consist of three steps. First, physical device $d$ generates its DT components, i.e., $\cC_d$, by using NFVO. This step is implemented with the cooperation between independent software vendors and standard edge servers.
%we assume that the required techniques therein have been matured.
Second, DTs of physical devices broadcast their offloading requests to all edge servers. According to these requests, the centralized controller generates the placement strategy by producing a binary vector $\bm{x} = \{\bm{x}_{1},\bm{x}_{2},\ldots,\bm{x}_{d},\ldots,\bm{x}_{D}\}$, where $\bm{x}_{d} = \{x_{sc}^{d}:s \in \cS, c \in \cC_{d}\}$ for $d\in\cD$. Thereof, $x_{sc}^{d} \in \{0,1\}$ is the placement indicator with regard to the $c$-th component of DT $d$.%\footnote{In this paper, a component of DT $d$ means the component belongs to the DT which is generated from physical device $d$.}.
Specifically, $x_{sc}^{d} =1$ means that the component $c$ of DT $d$ is placed at edge server $s$ and $x_{sc}^{d} =0$ otherwise. In the third step, the components per DT are offloaded, computed, and placed at different edge servers.
Moreover, components that belong to the same DT could be placed at different edge servers. These components need to communicate with each other while preceding their computations.
%Moreover, it is noteworthy that, the components belonging to the same DT and meanwhile situated in different edge servers shall communicate with each other for possible data exchange.

With the foregoing discussions, we provide a comprehensive lens to look at the inherent costs induced by the aforementioned steps. More precisely, three types of costs are taken into account, which are induced by offloading, communication, and computation, respectively. They are defined as follows:
\subsubsection{Offloading Cost}
We define $e_{s}^{d}$ as the Manhattan distance between edge server $s$ and physical device $d$, and define $r$ as the cost of transmitting one KB of data over one meter of distance. The cost of offloading can be expressed as follows:
\begin{equation}
\delta_{sc}^{d} = e_{s}^{d} \cdot h_{c}^{d} \cdot r.
\label{equCostCompStore}
\end{equation}

\subsubsection{Communication Cost}
%For the cloud-based platform, which has relatively low flexibility, it is not necessary for the inseparable DT model to consider the communication cost. However, for the more flexible edge-based server infrastructures, decoupled DT components should take this into account. In other words, the related communication cost should be accounted for in the performance manifestation.
As we mentioned above, to prevent placement unachievable caused by heavy workloads brought by complex DT tasks, we separate DT into DT components. Consequently, during the placement process, communication costs arise among these components within the network.
We define $l_{ss'}$ as the Manhattan distance between $s$ and $s'\in\cS$. This communication cost can be expressed as
\begin{equation}
\tau_{ss'cc'}^{d} = l_{ss'} \cdot g_{cc'}^{d} \cdot r.
\label{equCostCompComm}
\end{equation}

\subsubsection{Computation Cost}
The computation cost produced by the $c$-th component of DT $d$ at edge server $s$ is given by:
\begin{equation}
\gamma_{sc}^{d} = m_{s} \cdot n_{c}^{d}.
\label{equCostCompRun}
\end{equation}
It is noteworthy that the status of physical devices is uncertain and cannot be accurately predicted, incurring the high uncertainty of DT placement needs of physical devices. Namely, the required CPU cycles, i.e., $n_{c}^{d}$, become nondeterministic parameters \cite{difficultProblem} with finite distribution. On this basis, $\gamma_{sc}^{d}$ should also be nondeterministic parameters. %We will introduce these parameters in Section \ref{sec3}.
%As mentioned earlier, computing costs are nondeterministic due to the unpredictable nature of the real-time status of physical devices. %Therefore, it is inappropriate to define the overload as a fact that the computing cost ($\gamma_{sc}^{d}$), a nondeterministic variable, exceeds the limitation of edge servers. Instead, it should be depicted as a random event that this cost surpasses the capacities of edge servers. Accordingly,
In addition, this type of cost has an impact on network sustainability, which is quantified as the statistical probability of avoiding overload.
To address the sustainability control issue, a chance-constrained formulation is employed, providing a more accurate representation of this scenario \cite{riskAware}. Defining $\alpha$ as the risk factor of the network, which represents an acceptable/pre-determined failure rate of placement, the sustainability control expression can be presented as follows:
\begin{equation}
p\bigg\{\sum^{D}_{d=1}\sum^{C_{d}}_{c=1} \gamma_{sc}^{d} \cdot x_{sc}^{d} \leq T_{s}\bigg\} \geq (1-\alpha), ~\forall s.
\label{equSuccessProb}
\end{equation}

%Thirdly, to prevent overload on this limited resource, all available CPU resources should be accessible at all times. Thus, having sufficient computational resources would only mitigate the risk of overload, rather than reducing the costs of network operators or contractors. Consequently, we utilize this type of resource as the sustainability control factor instead of incorporating it directly into the objective function.

%Based on the network costs we introduced, the characteristics of this cost are first analyzed in the next subsection. Subsequently, we formulate a cost-minimization optimization problem according to the analysis.

\subsection{Problem Formulation}

In accordance with previous analyses, we now proceed to formulate the optimization problem in this subsection.  The objective of our considered optimization problem is to minimize the overall costs of WDTNs while restricting the placement failure rates caused by exceeding the computation capacities of edge servers within the pre-determined thresholds. To achieve this, we first provide the definition of the overall cost.
Let $\text{Cost}_{\text{off}}$ be the total offloading cost, which can be calculated as
\begin{equation}
\text{Cost}_{\text{off}} = \sum_{d\in\cD}\sum_{s\in\cS}\sum_{c\in\cC_{d}} \delta_{sc}^{d} \cdot x_{sc}^{d}.
\label{equCostOffloading}
\end{equation}
Next, we calculate the corresponding communication cost. For notation simplicity, we introduce a binary variable $y_{ss'cc'}^{d}$ to indicate the occurrence of communication cost between two components. Specifically, $y_{ss'cc'}^{d}=1$ if and only if both variables $x_{sc}^{d}$ and $x_{s'c'}^{d}$ are $1$, and $y_{ss'cc'}^{d}=0$ otherwise. For $d\in\cD$, let $\bm{y} = \{y_{ss'cc'}^{d}:s,s' \in \cS; c,c' \in \cC_{d}\}$ represent the vector of communication indicators. Therefore, the total communication cost, referred to as $\text{Cost}_{\text{com}}$, can be expressed as
\begin{equation}
\text{Cost}_{\text{com}} = \sum_{d\in\cD}\sum_{s\in\cS}\sum_{c\in\cC_{d}}\sum_{s'\in\cS}\sum_{c'\in\cC_{d}} \tau_{ss'cc'}^{d} \cdot y_{ss'cc'}^{d}.
\label{equCostCommunicating}
\end{equation}
The overall cost, which is the sum of the offloading cost and the communication cost,  %\footnote{As mentioned earlier in this article, CPU resources differ from the other two types of resources. Specifically, due to the unpredictable DT status in the real world, accurately predicting the required CPU resources at any given time is challenging. To prevent overload on this limited resource, all available CPU resources should be accessible at all times. Therefore, we utilize this type of resource as the sustainability control factor instead of incorporating it directly into the objective function.}
denoted as $\varrho$, can be expressed as follows:

\begin{equation}
\varrho=\text{Cost}_{\text{off}}+\text{Cost}_{\text{com}}.
\label{equOptEquitor}
\end{equation}
Based on the given definitions, our minimization problem can be mathematically formulated as follows:
\begin{align}
\cP_{1}:\underset{\bm{x},\bm{y}}{\min}~ & \varrho\nonumber\\
\mathrm{ s.t.}~ & \mathrm{C1:}~ p\bigg\{\sum^{D}_{d=1}\sum^{C_{d}}_{c=1} \gamma_{sc}^{d} \cdot x_{sc}^{d} \leq T_{s}\bigg\} \geq (1-\alpha), ~\forall s,\nonumber\\
& \mathrm{C2:}~ \sum^{S}_{s=1} x_{sc}^{d} = 1, ~\forall c,d,\nonumber\\
& \mathrm{C3:}~ x_{sc}^{d} \geq y_{ss'cc'}^{d}, ~\forall s,s',c,c',d,\nonumber\\
& \mathrm{C4:}~ x_{s'c'}^{d} \geq y_{ss'cc'}^{d}, ~\forall s,s',c,c',d,\nonumber\\
& \mathrm{C5:}~ x_{sc}^{d} + x_{s'c'}^{d} - 1 \leq y_{ss'cc'}^{d}, ~\forall s,s',c,c',d,\nonumber\\
& \mathrm{C6:}~ x_{sc}^{d}\in\{0,1\}, ~\forall s,c,d,\nonumber\\
& \mathrm{C7:}~ y_{ss'cc'}^{d}\in\{0,1\}, ~\forall s,s',c,c',d.\nonumber
\end{align}
C1 states that the probability of surpassing the computation capacity per edge server must not exceed the risk level $\alpha$. C2 restricts that any DT component can only be placed once. C3, C4, and C5 ensure that when placements of components are made, the communication indicators are set accordingly. In addition, C6 and C7 represent the integrity requirements for the decision variables.

As shown above, the formulated optimization problem involves nondeterministic parameters, which need to satisfy a specified probability threshold, and this problem is difficult to solve.
Specifically, the difficulties primarily stem from three aspects. Firstly, the high uncertainty surrounding $n_{c}^{d}$ results in different edge servers facing varying probability distributions for the required CPU cycles. Addressing this issue necessitates multidimensional integration across these diverse probability distributions corresponding to $s\in\cS$ \cite{SAA_Analysis}.
Moreover, when dealing with nonclassical and complex probability distributions, estimating detailed distributions through methods related to cumulative distribution functions (CDF) becomes impractical  \cite{CDF_Approximation}.
%thereby can not be depicted by the closed-form model.
%In addition, the problem-solving requires multidimensional integration for all of the constraints corresponding to $s\in\cS$.
%it is hard to predict the probability distributions of $n_{c}^{d}$, $s\in\cS$, which may different according to different $s$.
%Specifically, the difficulties mainly originate from two aspects.
%On the one hand,
Secondly, checking the feasibility of a given possible solution is impossible. Thirdly, the feasible region induced by constraints is non-convex.
%{\color{red}
%To achieve this goal, some approximating methods either focus on convex optimization problems, such as scenario approaching method \cite{ScenarioApproach}, or try to solve convex approximations of non-convex chance constraint, like randomized approaching method \cite{RandomizedApproach}. Unfortunately, often the solutions produced by these approaches are quite conservative \cite{difficultProblem}.
To achieve this goal, approximating methods can be used to implement problem transformation  \cite{ScenarioApproach,RandomizedApproach,difficultProblem}.
%}
Among the methods for solving chance-constrained optimization problems, the SAA method has gained significant attention \cite{SAAintroduce}. It is commonly employed to estimate the expectation of a stochastic program using Monte Carlo simulation-based approaches. Compared with other methods, SAA can offer improved candidate solutions in the chance-constrained problem. We adopt this method in our work due to the aforementioned reasons. In the next section, we present an approximating problem of $\cP_{1}$, which is transformed by the SAA method. Then, a concrete complexity analysis for this transformed problem is sorted out.

\section{Problem Transformation and Complexity Analysis}
\label{sec3}

In this section, we first transform the original sustainability-aware problem into an ILP by the SAA method. Then, we provide a rigorous complexity analysis of the transformed problem.

\subsection{Problem Transformation}
In this subsection, we will present the transformed problem of $\cP_{1}$ using the SAA method. To facilitate the analysis, we define $\bxi = \{1,2,\ldots,\Theta\}$ as an index set of observed samples of $\cP_{1}$.
In different samples, components of DT $d$ have different required CPU cycles for its placement, which is denoted as $n_{c}^{d}$. To differentiate $n_{c}^{d}$ in different samples, we rewrite it as $\widetilde{n}_{c}^{d\theta}$, where $\theta\in\bxi$. This notation means the required CPU cycles of $c$ in DT $d$ are observed in the sample $\theta$. Based on the SAA method \cite{SAA_Analysis}, $\{\widetilde{n}_{c}^{d\theta}:\theta\in\bxi\}$ is an independent Monte Carlo sample set of nondeterministic parameter $n_{c}^{d}$. In reality, these parameters are generally obtained from historical data. Correspondingly, we rewrite (\ref{equCostCompRun}) as
\begin{equation}
\widetilde{\gamma}_{sc}^{d\theta} = m_{s} \cdot \widetilde{n}_{c}^{d\theta},~\forall s,c,d,\theta,
\label{equCostCompRunApproxi}
\end{equation}
and refer to it as computation cost produced by the $c$-th component of DT $d$ at edge server $s$ observed in sample $\theta$. It should be noted that the values of $m_{s}$ and $T_{s}$ ($s\in\cS$) remain unchanged in every sample. That is because we consider edge servers to be fixed, as aforementioned. Accordingly, these properties remain deterministic. Similarly, we define $\text{Cost}_{\text{comp},s}^{\theta}$ as computation cost of server $s$ in sample $\theta$, it can be calculated as
\begin{equation}
\text{Cost}_{\text{comp},s}^{\theta} = \sum_{d=1}^{D}\sum_{c=1}^{C_{d}}\widetilde{\gamma}_{sc}^{d\theta}\cdot x_{sc}^{d},~\forall s,\theta.
\label{equCostComputation}
\end{equation}
Then, we compare this observed cost with the computational capacity of $s$, and term this difference value as the remaining computational capability of $s$. We write it as:
\begin{equation}
G_{s}(\bm{x},\theta) = \text{Cost}_{\text{comp},s}^{\theta} - T_{s}, ~s\in\cS,\theta\in\bxi.
\label{equAppEnergy}
\end{equation}
%If $G_{s}(\bm{x},\theta)$ is larger than 0, we say that the overload of $s$ in $\theta$ occurs. Otherwise, the CPU budget is not exceeded.
If $G_{s}(\bm{x},\theta)$ is larger than 0, the computation load on edge server $s$ in sample $\theta$ exceeds its computational capacity. In this case, we can say that an overload occurs at edge server $s$ in sample $\theta$. On the other hand, if $G_{s}(\bm{x},\theta)$ is equal to or less than 0, it means that the computation load lies within the CPU budget of edge server $s$, and an overload does not occur.
In other words, $G_{s}(\bm{x},\theta)$ can effectively indicate the overloads of $s\in\cS$ in sample $\theta\in\bxi$. For clarity, we define an indicator function $\mathds{1}(G_{s}(\bm{x},\theta))$, which takes value of 1 if $G_{s}(\bm{x},\theta)> 0$ and 0 otherwise. Further, we define
\begin{equation}
P_{s}(\bm{x}) = \frac{1}{\Theta}\sum_{\theta=1}^{\Theta}\mathds{1}(G_{s}(\bm{x},\theta)), ~s\in\cS
\label{equSAAproportion}
\end{equation}
as the proportion of realizations in the samples where $G_{s}(\bm{x},\theta) > 0$, indicating an overload at edge server $s$.
%as the proportion of realizations with $G_{s}(\bm{x},\theta)> 0$, which means overload, in samples.
However, it is important to note that this proportion is obtained from observations, which cannot be used to compare with $\alpha$ directly. To mitigate this issue, we give another risk level $\varepsilon\in(0,\alpha]$, and express the sample approximation of sustainability control as
\begin{equation}
P_{s}(\bm{x}) \leq \varepsilon, ~\forall s.
\label{equSAAexpression}
\end{equation}
This means that for each $s\in\cS$, the proportion of realizations of overload in samples is less than or equal to $\varepsilon$, which is no larger than $\alpha$. In the paper authored by S. Ahmed \emph{et al.} \cite{difficultProblem}, they have proven that if we replace C1 of $\cP_{1}$ with (\ref{equSAAexpression}), the solution of the transformed problem can be a lower bound of the solution of $\cP_{1}$ with a high probability.
%{\color{red}
%They also proved that, the probability defined by
In fact, this probability, which can be termed as approximating success probability, is calculated by
\begin{equation}
P^{\Theta}_{\alpha,\varepsilon} = 1-\exp{\left(-\Theta\frac{(\alpha-\varepsilon)^2}{2\varepsilon}\right)}.
\label{equFeasibleSolu}
\end{equation}
%which can be named as approximating success probability, and can be used to measure whether the solution of the approximating problem is a feasible solution of $\cP_{1}$. If it is, $P^{\Theta}_{\alpha,\varepsilon}$ should be approximated to 1. That means we need a large $\Theta$ and an appropriately small $\varepsilon$ in our approximation.
To increase the likelihood of $P^{\Theta}_{\alpha,\varepsilon}$ approaching 1, we require a large value for $\Theta$ and a suitably small value for $\varepsilon$ in our approximation.
%}
In this manner, we can convert the chance-constrained problem $\cP_{1}$ into a problem that is subject to a determined proportion. We can also check the feasibility of the solution by examining the sample realizations \cite{difficultProblem}. As a consequence, we can overcome the difficulty of feasibility checking in $\cP_{1}$ as discussed at the end of the previous section. However, it is still unclear how to adjust the solution to satisfy this proportion. Therefore, we rewrite the approximating problem as an ILP by introducing a new set of binary variables. Specifically, we define the vector of variables as $\bm{z}=\{z_{s\theta}:s\in\cS,\theta\in\bxi\}$, where, $z_{s\theta}\in\{0,1\}$ indicates if edge server $s$ is overloaded in sample $\theta$. Specifically, $z_{s\theta}=1$ means that the computational capacity of edge server $s$, referred to as $T_{s}$, is less than its computation cost in sample $\theta$, and $z_{s\theta}=0$ otherwise. Since all of control variables are vectors, in which the elements are binary variables, we can combine them by defining a state as $\bm{a} \triangleq [\bm{x}, \bm{y}, \bm{z}]^{T}$ for the conciseness, where $\bm{x} = [x_{11}^{1},\ldots,x_{sc}^{d},\ldots,x_{SC_{D}}^{D}]$, $\bm{y} = [y_{1111}^{1},\ldots,y_{ss'cc'}^{d},\ldots,y_{SSC_{D}C_{D}}^{D}]$, and $\bm{z} = [z_{11},\ldots,z_{s\theta},\ldots,z_{S\Theta}]$. Correspondingly, the set of all possible states can be denoted as $\cA$.
Based on the analyses above, $\cP_{1}$ can be reformulated as the following ILP:
\begin{align}
\cP_{2}:\underset{\bm{a}}{\min}~ & \varrho\nonumber\\
\mathrm{ s.t.}~ & \mathrm{C1:}~ G_{s}(\bm{x},\theta) \leq W\cdot z_{s\theta}, ~\forall s,\theta,\nonumber\\
& \mathrm{C2:}~ \sum_{\theta=1}^{\Theta}z_{s\theta}\leq\varepsilon\cdot\Theta, ~\forall s,\nonumber\\
& \mathrm{C3:}~ \sum^{S}_{s=1} x_{sc}^{d} = 1, ~\forall c,d,\nonumber\\
& \mathrm{C4:}~ x_{sc}^{d} \geq y_{ss'cc'}^{d}, ~\forall s,s',c,c',d,\nonumber\\
& \mathrm{C5:}~ x_{s'c'}^{d} \geq y_{ss'cc'}^{d}, ~\forall s,s',c,c',d,\nonumber\\
& \mathrm{C6:}~ x_{sc}^{d} + x_{s'c'}^{d} - 1 \leq y_{ss'cc'}^{d}, ~\forall s,s',c,c',d,\nonumber\\
& \mathrm{C7:}~ x_{sc}^{d}\in\{0,1\}, ~\forall s,c,d,\nonumber\\
& \mathrm{C8:}~ y_{ss'cc'}^{d}\in\{0,1\}, ~\forall s,s',c,c',d,\nonumber\\
& \mathrm{C9:}~ z_{s\theta}\in\{0,1\}, ~\forall s,\theta,\nonumber
\end{align}
where $W$ is a very large positive integer. In $\cP_{2}$, C1 illustrates whether the constraint of computation capacity of $s$ in sample $\theta$ is violated. Meanwhile, C2 ensures that the proportion of overload is not greater than the risk level $\varepsilon$ in the ILP, i.e., (\ref{equSAAexpression}). C3-C8 are identical to the corresponding constraints of the original problem $\cP_{1}$. C9 represents the integrity requirements for the binary vector $\bm{z}$.
According to the analysis of the SAA method in \cite{difficultProblem,SAA_Analysis}, for $\varepsilon<\alpha$, as $\Theta$ increases, the likelihood that feasible solutions of $\cP_{2}$ will also be feasible for $\cP_{1}$ will be high.
%To get the optimal solution of $\cP_{1}$, it suffices for us to solve $\cP_{2}$.
In the next subsection, we will provide the time complexity analysis of this transformed problem.

\subsection{Complexity Analysis} \label{complexity}
\label{sec3b}

In a WDTN system, the time efficiency of the placement problem is crucial due to the limited CPU resources available. However, this requirement conflicts with the non-convex nature of the feasible region. The complexity analysis of $\cP_{2}$ emphasizes the critical nature of this challenge.
%In a WDTN system, the time efficiency of the placement problem is crucial, as excessive computational complexity can consume the already limited CPU resources. However, this requirement conflicts with the second difficulty, which is the non-convex of the feasible region and is mentioned at the end of Section \ref{sec2}. The following complexity analysis of $\cP_{2}$ further highlights the critical nature of this difficulty.
\begin{myth}
$\cP_{2}$ is NP-hard.
\end{myth}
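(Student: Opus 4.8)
The plan is to prove NP-hardness by a polynomial-time reduction from \textsc{Partition}, exploiting the knapsack-type constraints C1--C2 of $\cP_{2}$; the communication and offloading terms will be neutralized so that the whole difficulty is concentrated in the capacity condition. Since an optimization problem is NP-hard once its decision version is, it suffices to exhibit a family of $\cP_{2}$ instances, constructible in polynomial time, whose feasibility (equivalently, whose optimal value) decides an arbitrary \textsc{Partition} instance.

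Given positive integers $a_{1},\dots,a_{D}$ with $\sum_{d} a_{d}=2B$, I would build the following instance of $\cP_{2}$: take the $D$ physical devices of $\cD$, each carrying a single DT component ($C_{d}=1$ for all $d$); take $S=2$ edge servers with identical unit computing cost $m_{1}=m_{2}=1$ and identical capacity $T_{1}=T_{2}=B$; take a single sample, $\Theta=1$, with $\widetilde{n}_{1}^{d1}=a_{d}$ so that $\widetilde{\gamma}_{s1}^{d1}=a_{d}$; pick the risk level $\varepsilon\in(0,\alpha]$, so that $\varepsilon\Theta=\varepsilon<1$; and set $W$ to any integer at least $\sum_{d}a_{d}$. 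Finally, choose the geometric data so that the objective is constant on the feasible set: either put $r=0$, or, to keep costs nontrivial, keep $r>0$ but set $e_{1}^{d}=e_{2}^{d}$, so that every placement incurs the same offloading cost $\mathrm{Cost}_{\mathrm{off}}$, while the communication cost vanishes (with one component per DT, $g_{11}^{d}$ is a self-exchange term that can be set to $0$; moreover, by C3--C6, $y_{ss'11}^{d}=1$ forces $s=s'$, and $l_{ss}=0$).

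The correctness argument is then routine. In this instance C2 reads $z_{s1}\le\varepsilon<1$, which forces $z_{s1}=0$ for $s=1,2$ by integrality of $\bm{z}$; C1 becomes $G_{s}(\bm{x},1)=\sum_{d}a_{d}x_{s1}^{d}-B\le0$; and C3 is $x_{11}^{d}+x_{21}^{d}=1$. Hence a feasible state $\bm{a}=[\bm{x},\bm{y},\bm{z}]^{T}\in\cA$ exists iff $\cD$ splits into $I=\{d:x_{11}^{d}=1\}$ and its complement with $\sum_{d\in I}a_{d}\le B$ and $\sum_{d\notin I}a_{d}\le B$; since these totals sum to $2B$, both must equal $B$, i.e.\ $\{a_{d}\}$ admits a balanced partition. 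Because $\varrho$ is constant on all feasible states, the decision version ``is the optimum of $\cP_{2}$ at most $\mathrm{Cost}_{\mathrm{off}}$?'' answers exactly the given \textsc{Partition} instance; the construction is clearly polynomial in the input size, so $\cP_{2}$ is NP-hard. As a by-product, even deciding feasibility of $\cP_{2}$ is NP-hard, which also accounts for the feasibility-checking difficulty noted for $\cP_{1}$.

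This reduction has no deep obstacle; the only point requiring care is the bookkeeping that confirms the restricted instance genuinely collapses onto \textsc{Partition} --- namely that the big-$M$ constant $W$ together with the linking constraints C1--C2 exactly reproduce the hard capacity condition once $\Theta=1$ and $\varepsilon<1$, and that the auxiliary variables $\bm{y}$ (and the single-component terms $g_{11}^{d}$, $l_{ss}$) contribute nothing to $\varrho$. One should also dispatch the trivial parity case of \textsc{Partition} (odd total sum $\Rightarrow$ no-instance, and likewise no feasible state). An essentially identical argument reduces instead from \textsc{Bin Packing} / minimum-makespan scheduling $P\,\|\,C_{\max}$ with $S$ servers, which additionally yields hardness when $S$ is part of the input.
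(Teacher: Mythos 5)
Your reduction is correct, but it takes a genuinely different route from the paper. The paper reduces from the multi-component application placement problem (MCAPP-MINLP, cited as NP-hard via TSP): it sets $\varepsilon=1$ and $\widetilde{n}_{ij}^{\theta}=(W+T_i)/m_i$ so that the chance/capacity machinery (C1--C2 and the $\bm{z}$-variables) becomes inert, and the surviving difficulty is the quadratic server--component and inter-component cost structure, recovered after eliminating $\bm{y}$ via $y_{ii'jj'}=x_{ij}\cdot x_{i'j'}$. You do the opposite: you make the objective constant on the feasible set and concentrate all the hardness in the capacity constraint with $\Theta=1$, $\varepsilon<1$, two servers and one component per device, reducing from \textsc{Partition}. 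Both arguments are sound. What each buys: yours is more elementary and self-contained, and it additionally shows that merely \emph{deciding feasibility} of $\cP_{2}$ is NP-hard --- something the paper's reduction cannot give, since with $\varepsilon=1$ its instances are always feasible; this also rules out any polynomial-time approximation unless P$=$NP. The paper's reduction shows instead that the problem stays hard even when the sustainability constraint is vacuous (the hardness already lives in the communication term), and it inherits \emph{strong} NP-hardness from MCAPP, whereas plain \textsc{Partition} yields only weak NP-hardness --- your closing remark about reducing from $P\,\|\,C_{\max}$ or bin packing is the right fix if strong hardness is wanted. Your bookkeeping is right: $z_{s1}=0$ is forced by integrality because $\varepsilon\Theta<1$, $l_{ss}=0$ kills the communication term for single-component DTs, and taking $r=0$ (or colocating the two servers, which is cleaner than demanding $e_1^d=e_2^d$ for all $d$ at distinct locations under the Manhattan metric) makes $\varrho$ constant.
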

\begin{IEEEproof}
We prove $\cP_{2}$ is NP-hard by showing that there is a polynomial-time reduction that maps any instance of an NP-hard problem to this problem. Here, we consider an NP-hard problem called multi-component application placement problem (MCAPP), which is a mixed integer non-linear programming (MINLP) problem and has been proven to be NP-hard in \cite{NPhard}. It is expressed as follows:
\begin{align}
\text{MCA}&\text{PP-MINLP}:\nonumber\\
\min~ & \sum_{i=1}^{m}\sum_{j=1}^{n}(\omega_{ij}\cdot x_{ij}+\sum_{i'=1}^{m}\sum_{j'=1}^{n}\tau_{ii'jj'}\cdot x_{ij}\cdot x_{i'j'})\nonumber\\
\mathrm{ s.t.}~ & \mathrm{C1:}~ \sum_{j=1}^{n}x_{ij}\leq 1,~i=1,\ldots,m,\nonumber\\
& \mathrm{C2:}~ \sum_{i=1}^{m}x_{ij}=1,~j=1,\ldots,n,\nonumber\\
& \mathrm{C3:}~ x_{ij}\in\{0,1\},~i=1,\ldots,m;~j=1,\ldots,n,\nonumber
\end{align}
where $m$ is the number of servers and $n$ is the number of components. Besides, $\omega_{ij}$ is the server-component cost, and $\tau_{ii'jj'}$ is the inter-component cost. The decision variable $x_{ij}$ is set to 1 if component $j$ is assigned to server $i$ and 0 otherwise. Given an instance of MCAPP-MINLP with the aforementioned parameters, we construct an instance of $\cP_{2}$ with 1 physical device, $m$ servers, and $n$ components. For each server $i\in\{1,2,\ldots,m\}$ and component $j\in\{1,2,\ldots,n\}$, we set: $\varepsilon\triangleq1$, $\delta_{ij}\triangleq\omega_{ij}$, and $\widetilde{n}_{ij}^{\theta}\triangleq\frac{W+T_{i}}{m_{i}}$. Thus, the overall cost can be written as
\begin{align}
&\varrho' = \sum_{i=1}^{m}\sum_{j=1}^{n} \omega_{ij} \cdot x_{ij}+\nonumber\\
& \sum_{i=1}^{m}\sum_{j=1}^{n}\sum_{i'=1}^{m}\sum_{j'=1}^{n} \tau_{ii'jj'} \cdot y_{ii'jj'},
\label{equProofOverCost}
\end{align}
and $\cP_{2}$ can be written as
\begin{align}
\cP'_{2}:\underset{\bm{a}}{\min}~ & \varrho'\nonumber\\
\mathrm{ s.t.}~ & \mathrm{C1:}~ \sum_{j=1}^{m}(W+T_{i})\cdot x_{ij}-T_{i}\leq W\cdot z_{i\theta},~\forall i,\theta,\nonumber\\
& \mathrm{C2:}~ \sum_{\theta=1}^{\Theta}z_{i\theta}\leq\Theta, ~\forall s,\nonumber\\
& \mathrm{C3:}~ \sum^{m}_{i=1} x_{ij} = 1, ~\forall j,\nonumber\\
& \mathrm{C4:}~ x_{ij} \geq y_{ii'jj'}, ~\forall i,i',j,j',\nonumber\\
& \mathrm{C5:}~ x_{ij} + x_{i'j'} - 1 \leq y_{ii'jj'}, ~\forall i,i',j,j',\nonumber\\
& \mathrm{C6:}~ x_{ij}\in\{0,1\}, ~\forall i,j,\nonumber\\
& \mathrm{C7:}~ y_{ii'jj'}\in\{0,1\}, ~\forall i,i',j,j',\nonumber\\
& \mathrm{C8:}~ z_{i\theta}\in\{0,1\}, ~\forall i,\theta.\nonumber
\end{align}
It is noticed that $z_{i\theta}$ is a decision variable that would not affect the result in $\cP'_{2}$. Therefore, we let $z_{i\theta} = 1,~\forall i,\theta$. Then, the problem can be written as
\begin{align}
\cP''_{2}:\underset{\bm{x},\bm{y}}{\min}~ & \varrho'\nonumber\\
\mathrm{ s.t.}~ & \mathrm{C1:}~ \sum_{j=1}^{m}x_{ij}\leq 1,~\forall i,\theta,\nonumber\\
& \mathrm{C2:}~ \sum^{m}_{i=1} x_{ij} = 1, ~\forall j,\nonumber\\
& \mathrm{C3:}~ x_{ij} \geq y_{ii'jj'}, ~\forall i,i',j,j',\nonumber\\
& \mathrm{C4:}~ x_{ij} + x_{i'j'} - 1 \leq y_{ii'jj'}, ~\forall i,i',j,j',\nonumber\\
& \mathrm{C5:}~ x_{ij}\in\{0,1\}, ~\forall i,j,\nonumber\\
& \mathrm{C6:}~ y_{ii'jj'}\in\{0,1\}, ~\forall i,i',j,j'.\nonumber
\end{align}
In general, linear programming can not be converted to non-linear programming. However, in some specific situations, i.e. integer programming, we can transform one to another by adding or removing optimizing variables. From C3 in $\cP''_{2}$, we know that if $x_{ij}=0$ or $x_{i'j'}=0$, then $y_{ii'jj'}=0$. From C4 in $\cP''_{2}$, we know that if $x_{ij}=1$ and $x_{i'j'}=1$, then $y_{ii'jj'}=1$. Thus, we can express $y_{ii'jj'}=x_{ij}\cdot x_{i'j'}, ~\forall i,i',j,j'$.  Thus, $\cP''_{2}$ can be reduced to:
\begin{align}
\cP'''_{2}:\underset{\bm{x}}\min~ & \sum_{i=1}^{m}\sum_{j=1}^{n}(\omega_{ij}\cdot x_{ij}+\sum_{i'=1}^{m}\sum_{j'=1}^{n}\tau_{ii'jj'}\cdot x_{ij}\cdot x_{i'j'})\nonumber\\
\mathrm{ s.t.}~ & \mathrm{C1:}~ \sum_{j=1}^{n}x_{ij}\leq 1,~i=1,\ldots,m,\nonumber\\
& \mathrm{C2:}~ \sum_{i=1}^{m}x_{ij}=1,~j=1,\ldots,n,\nonumber\\
& \mathrm{C3:}~ x_{ij}\in\{0,1\},~i=1,\ldots,m;~j=1,\ldots,n.\nonumber
\end{align}
It is noticed that $\cP'''_{2}$ is identical with MCAPP-MINLP. Hence, we have a polynomial-time reduction from any instance of the MCAPP-MINLP to an instance of $\cP_{2}$. By reducing a decision version of MCAPP into the traveling salesman problem (TSP), T. Bahreini et \textsl{al} \cite{NPhard} have proved MCAPP-MINLP as NP-hard. Therefore, $\cP_{2}$ is NP-hard.
\end{IEEEproof}

Building on these results, in Section \ref{sec4}, we introduce an improved local search algorithm to efficiently discover suboptimal solutions for $\cP_{2}$. Furthermore, in Section \ref{sec5}, we compare the achieved suboptimal solutions with those found by several baselines to demonstrate the superiority of our proposed algorithm.

\section{Algorithm Design}
\label{sec4}
In this section, we introduce the proposed placement algorithm, which is called the improved local search algorithm, to solve $\cP_{2}$. We will first provide an overview of the developed scheme and then elaborate on the details of the algorithm.

\subsection{Overview of the Proposed Algorithm} \label{over}
Based on the analyses in Section \ref{complexity}, it is evident that solving problem $\cP_{2}$ optimally in polynomial time is not feasible. However, local search algorithms can efficiently search for local optimal solutions. It is worth noting that the quality of local search methods is highly dependent on the selection of the starting state/point. Yet, when the search space is both extremely large and non-convex, finding the starting state that leads to the global optimal state can be challenging. To address this problem, we propose an improved local search algorithm by exploring the relationship between starting states and corresponding local optima. More specifically, the proposed scheme works in an iterative manner, where each iteration consists of two phases. In the first phase, we start from a given starting state generated by the previous iteration and use conventional local search methods to reach a local optimal solution to $\cP_{2}$. Note that, for a given specific local search method, the trajectory of its convergence is predetermined. Each point on the trajectory corresponds to paired data, i.e., the system state and the associated local optimal value of $\cP_{2}$ under this state. Moreover, taking any point on this trajectory as the starting point will lead to the same local optimal solution \cite{STAGEPhD}. These points will be used in phase two, where we use the generated paired data from phase one to learn the mapping relationship between the starting states and their local optima. After obtaining the mapping function, we can apply the local search method used in phase one again to produce a good initial point, which will be used as the starting state of phase one in the next iteration.

Based on the foregoing discussions, it can be observed that our developed algorithm functions as a smart-restart approach to conventional local search methods, which has the potential to significantly enhance the results obtained through local search. %\textcolor{blue}{Fig. \ref{fig2} demonstrates the mechanism of the $t$-th iteration for our proposed search algorithm.}
For brevity, we summarize the mechanism of the $t$-th iteration for our proposed search algorithm in Fig. \ref{fig2}.
In the following section, we will provide more details about the two phases involved in each iteration.

\begin{figure}
    \centering
    \includegraphics[width=3.2in]{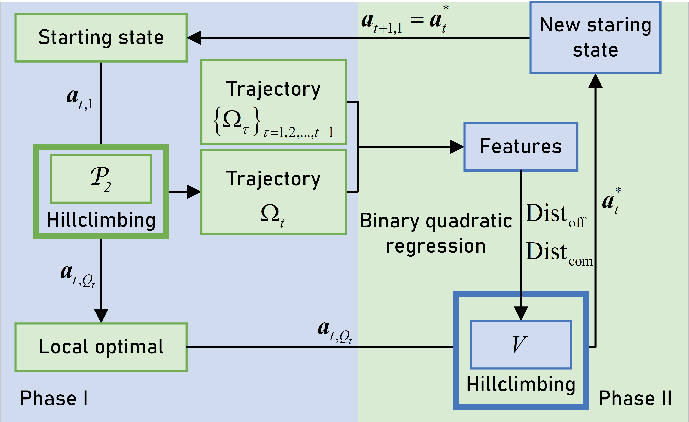}
    \caption{The $t$-th iteration of our improved local search algorithm.}
    \label{fig2}
\end{figure}

\subsection{Details of the Proposed Algorithm}
Our designed algorithm works in an iterative manner. Without loss of generality, we take the $t$-th iteration as an example to introduce the detailed implementations of each phase. For simplicity, we make some definitions that will be used in the following analyses. We define the starting state of the $t$-th iteration as $\bm{a}_{t,1}\in\cA$, and denote its corresponding objective value by $\varrho_{t,1}$, which is calculated by (\ref{equOptEquitor}). With a given searching policy $\pi$, starting from $\bm{a}_{t,1}$, a local optimal state can be reached. It is noteworthy that, in this context, the symbol $\pi$ represents any local search method. We assume that $\pi$ requires $Q_t$ times of searching to converge.
%Thus, the local optimal state can be represented by $\bm{a}_{t,Q_t}$, whose associated objective value is the local optimum of this iteration, which is given by $V_{t,Q_t}$ based on our definitions.
Thus, the local optimal state can be represented by $\bm{a}_{t,Q_t}$, as shown in the left side of Fig. \ref{fig2}. Its associated objective value is the local optimum of this iteration, which is given by $\varrho_{t,Q_t}$ based on our definitions.
It can be observed that the $t$-th iteration generates $Q_t$ pairs of data denoted by $\Omega_t=\{(\bm{a}_{t,q}, \varrho_{t,Q_t})_{q=1,2,\ldots,Q_t}\}$. Each data pair can be viewed as a tuple comprising the state vector $\bm{a}_{t,q}$, where $q\in\{1,2,\ldots,Q_t\}$, and the corresponding local optimal value $\varrho_{t,Q_t}$. With the foregoing discussions, we present our improved local search method by introducing the two phases of the algorithm.

\subsubsection{Phase I of the $t$-th iteration}

In the first phase of the $t$-th iteration, we begin with $\bm{a}_{t,1}$ and apply the local search method $\pi$ to find $\bm{a}_{t,Q_t}$, generating $\Omega_t$. This dataset is then used for relationship learning in the second phase. For this study, we utilize the Hillclimbing algorithm as the searching policy ($\pi$) due to its fast convergence, which allows for quicker collection of trajectory samples \cite{whyHillclimbing}. Furthermore, the aforementioned relationship is a mapping between the starting states and their local optima, which can be defined as $V:\cA\rightarrow\mathbb{R}$.
%Details will be introduced in the following subsection.
\subsubsection{Phase II of the $t$-th iteration}

In Phase II of the proposed algorithm, we use the states set $\Omega_t$ generated from phase one to learn a closed form of $V$, and predict a better-starting state to enhance the performance of the search results, as shown in the middle of Fig. \ref{fig2}. Specifically, due to the large number of state parameters, we generate the closed form of $V$ by integrating two feature values $\text{Dist}_{\text{off}}$ and $\text{Dist}_{\text{com}}$ from states set $\Omega_t$ by
\begin{equation}
\text{Dist}_{\text{off}} = \sum_{d\in\cD}\sum_{s\in\cS}\sum_{c\in\cC_{d}} e_{sc}^{d} \cdot x_{sc}^{d}
\label{equFeature1}
\end{equation}
and
\begin{equation}
\text{Dist}_{\text{com}} = \sum_{d\in\cD}\sum_{s\in\cS}\sum_{c\in\cC_{d}}\sum_{s'\in\cS}\sum_{c'\in\cC_{d}} l_{ss'} \cdot y_{ss'cc'}^{d},
\label{equFeature2}
\end{equation}
where $x_{sc}^{d},y_{ss'cc'}^{d}\in\bm{a}_{t,q}$,
%and $\bm{a}_{t,q}\in\Omega_t$
in order to simplify the fitting complexity and enhancing convergence speed.
The reason for selecting $\text{Dist}_{\text{off}}$ and $\text{Dist}_{\text{com}}$ as the features are the observation of a positive correlation between distance and cost in $\cP_{2}$.
Then, we use a binary quadratic regression model with feature values ($\text{Dist}_{\text{off}}$ and $\text{Dist}_{\text{com}}$) as observations.
It is worth noting that not only the trajectory in the $t$-th iteration but also those from all previous iterations are used to learn $V$.
Furthermore, any state in these trajectories can be considered as a starting state for learning this model, as mentioned in the overview. The binary quadratic regression model provides an approximation of the relationship between starting states and their corresponding local optima.
Therefore, in the regression procedure, we combine $\varrho_{t,Q_t}$, $\text{Cost}_{\text{off}}$, and $\text{Cost}_{\text{com}}$ as a tuple, and then apply the least squares method to perform the regression.
%For the purpose of performance enhancement,
Our algorithm is expected to predict an improved starting state. Based on the learned mapping relationship ($V$), the algorithm takes the local optimal state of phase one, i.e. $\bm{a}_{t,Q_t}$, as the starting point to keep searching the local optimal state of $V$ by using Hillclimbing as well, as shown in the right side of Fig. \ref{fig2}. The newly reached local optimal state, denoted by $\bm{a}_{t}^{*}$, represents a predicted better-starting state compared to the previous starting states ($\bm{a}_{t,1}$) in the trajectory points.
%It is not difficult to observe that the more iterations carried out, the more accurate the prediction will be.
It is not difficult to observe that the prediction accuracy increases along with the iteration numbers.
We then use the predicted starting state from phase two as the initial point of the next iteration, i.e.,
\begin{equation}
\bm{a}_{t+1,1} = \bm{a}_{t}^{*}.
\label{equStartingState}
\end{equation}
%It should be noted that, f
For the first iteration of our algorithm, no predicted starting state is available. In this case, the starting state for phase one is generated randomly.

%Last but not least, based on previous discussions
Now, it becomes evident that finding the global optimal state can be difficult. As a result, we define $\Delta$ as the convergence bound for the proposed search algorithm to find a near-optimal state. We then provide a brief overview of the convergence criterion, which is expressed as follows \cite{convergenceSitu}:
\begin{equation}
\frac{|\varrho_{t, Q_{t}}-\varrho_{t-1, Q_{t-1}}|}{|\varrho_{t, Q_{t}}|+|\varrho_{t-1, Q_{t-1}}|}<\Delta,
\label{equFinalConv}
\end{equation}
where (\ref{equFinalConv}) holds true depicts that the state $\bm{a}_{t,Q_t}$ is the near-optimal state the developed algorithm has found. For the sake of clarity, we summarize the pseudo-code of our developed placement method in Algorithm \ref{alg1}.
\begin{algorithm}
    %\small
	\caption{WDTN placement algorithm.}
	\label{alg1}
	\begin{algorithmic}[1]
        \STATE Let $t = 0$, and define convergence bound as $\Delta$.
        \STATE Initialize $\varrho_{t,Q_{t}}=\infty$.
        \STATE Randomly initialize starting state $\bm{a}_{1,1}$.
        \REPEAT
            \STATE Run Hillclimbing method $\pi$ from $\bm{a}_{t,1}$ on $\cP_{2}$ to find $\bm{a}_{t,Q_t}$ and $\varrho_{t, Q_t}$.
            \STATE Produce trajectory $\Omega_t$.
            \STATE Learn $V$ using binary quadratic regression with $\text{Dist}_{\text{off}}$ and $\text{Dist}_{\text{com}}$ as observations.
            \STATE Run Hillclimbing method $\pi$ from $\bm{a}_{t,Q_t}$ on $V$ to find $\bm{a}_{t}^{*}$.
            \STATE Set $\bm{a}_{t+1,1}$ using (\ref{equStartingState}).
            \STATE Update $t \leftarrow t+1$.
        \UNTIL (\ref{equFinalConv}) becomes true
        \RETURN the near-optimal state $\bm{a}_{t,Q_t}$.
	\end{algorithmic}
\end{algorithm}

It is noteworthy that the formulated problem is challenging to solve, as demonstrated in Section \ref{sec3b}. Although our proposed algorithm encounters difficulties in achieving the optimal solution, it provides an applicable solution for practical systems, particularly in large-scale scenarios.
Before ending this section, we give the complexity analysis of our proposed placement algorithm.
\begin{myle}
%The time complexity of Algorithm \ref{alg1} is $\cO(T^{4}\cdot|\bm{a}|^{3})$.
The time complexity of Algorithm \ref{alg1} is $\cO(\sum_{t=1}^{T}(Q_{t}|\bm{a}|+(\sum_{i=1}^{t}Q_{i})^{3}))$.
\end{myle}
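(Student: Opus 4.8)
The plan is to analyze the cost of one iteration of Algorithm~\ref{alg1} and then sum over all $T$ iterations. The key observation is that iteration $t$ performs three kinds of work: (i) running Hillclimbing on $\cP_2$ starting from $\bm{a}_{t,1}$, which takes $Q_t$ search steps, each step evaluating/comparing neighbors that differ in a bounded number of the $|\bm{a}|$ coordinates, so this phase costs $\cO(Q_t |\bm{a}|)$; (ii) fitting the binary quadratic regression model $V$ on the accumulated trajectory data, which by the time we reach iteration $t$ contains $\sum_{i=1}^{t} Q_i$ paired data points; and (iii) running Hillclimbing on the surrogate $V$ to obtain $\bm{a}_t^{*}$, whose cost I would argue is dominated by (or absorbed into) the other two terms. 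The core of the argument is therefore pinning down the cost of step (ii).

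First I would make precise the per-step cost of Hillclimbing on $\cP_2$. Each move examines neighbors obtained by flipping or reassigning components; recomputing $\varrho$ and checking feasibility of C1--C9 incrementally touches $\cO(|\bm{a}|)$ entries, and there are $Q_t$ such steps before convergence, giving the $Q_t|\bm{a}|$ contribution. Next I would account for the regression in Phase~II. The least-squares fit of a binary quadratic model has a fixed (constant) number of features — the two chosen features $\text{Dist}_{\text{off}}$ and $\text{Dist}_{\text{com}}$ together with their quadratic/cross terms — so the normal-equations solve is $\cO(1)$ in the feature dimension but requires forming the design matrix over all $N_t \triangleq \sum_{i=1}^{t} Q_i$ historical samples, which naively is $\cO(N_t)$; the dominant stated term $(\sum_{i=1}^{t}Q_i)^3$ must then come from a less efficient bookkeeping of the regression step (e.g.\ treating the fit as an $N_t \times N_t$ linear-algebra operation, or re-fitting cumulatively). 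I would state explicitly which convention is used so the cubic term is justified, then observe that Hillclimbing on $V$ in step (iii) costs no more than $\cO(N_t \cdot \text{poly})$ which is subsumed by $(\sum_{i=1}^{t} Q_i)^3$.

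Combining, iteration $t$ costs $\cO\bigl(Q_t|\bm{a}| + (\sum_{i=1}^{t} Q_i)^3\bigr)$, and summing over $t = 1, \dots, T$ yields the claimed bound $\cO\bigl(\sum_{t=1}^{T}(Q_t|\bm{a}| + (\sum_{i=1}^{t} Q_i)^3)\bigr)$; since big-$\cO$ of a sum is the sum of big-$\cO$s, no further simplification is needed (one could loosen it to $\cO(T\,Q_{\max}|\bm{a}| + T\,(\sum_{i=1}^{T}Q_i)^3)$, but the per-iteration form is tighter and matches the statement).

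The main obstacle is justifying the cubic term $(\sum_{i=1}^{t} Q_i)^3$: with only two regression features, a careful implementation of the least-squares fit is linear in the number of samples, so the paper must be implicitly assuming a particular (non-incremental, matrix-inversion-style) realization of the regression or of the surrogate Hillclimbing, and the proof needs to state that assumption cleanly rather than leave it implicit. A secondary subtlety is arguing that the accumulated dataset size at iteration $t$ is exactly $\sum_{i=1}^{t} Q_i$ — i.e.\ that all trajectory points from all previous iterations are retained and re-used, as the algorithm description states — since this is what turns a per-iteration cost into the nested sum appearing in the bound.
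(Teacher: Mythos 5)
Your decomposition matches the paper's proof essentially verbatim: the paper charges $\cO(Q_t|\bm{a}|)$ to the two Hillclimbing runs in iteration $t$ and $\cO\bigl((\sum_{i=1}^{t}Q_i)^3\bigr)$ to the binary quadratic regression over the $\sum_{i=1}^{t}Q_i$ accumulated trajectory samples, then sums over the $T$ iterations. Your side remark that a least-squares fit with a constant number of features is actually linear in the sample count is well taken --- the paper justifies the cubic term only by citation, implicitly assuming a matrix-inversion-style implementation --- but since the claim is an upper bound this does not affect correctness, and your argument is otherwise the same as the paper's.
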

\begin{proof}
%Based on the foregoing discussions, we know that Algorithm \ref{alg1} is constituted by the Hillclimbing method and binary quadratic regression. For its $t$-th iteration, Hillclimbing is used twice to find local optimal states, resulting in a time complexity of $\cO(|\bm{a}|)$ \cite{whyHillclimbing}. In the meantime, a time complexity of $\cO((t\cdot|\bm{a}|)^{3})$ is required for binary quadratic regression \cite{regressionComplexity}. With the two-fold analyses, the time complexity per iteration of Algorithm \ref{alg1} is primarily dominated by the complexity of binary quadratic regression. Let $T$ denote the number of iterations required for Algorithm \ref{alg1} to converge. Taking sample accumulation into account throughout the iterations, the overall time complexity of the algorithm can be expressed as $\mathcal{O}((\frac{T^2 + T}{2})^2 \cdot |\bm{a}|^3)$, which equals to $\cO(T^{4}\cdot|\bm{a}|^{3})$. This completes the proof.
Based on the foregoing analyses, we know that Algorithm \ref{alg1} is constituted by the Hillclimbing method and binary quadratic regression. For its $t$-th iteration, Hillclimbing is used twice to find local optimal states, resulting in a time complexity of $\cO(Q_{t}|\bm{a}|)$ \cite{whyHillclimbing}. In the meantime, a time complexity of $\cO((\sum_{i=1}^{t}Q_{i})^{3})$ is required for binary quadratic regression \cite{regressionComplexity}.
Let $T$ denote the number of iterations required for Algorithm \ref{alg1} to converge, the proof is completed.%The time complexity of Algorithm \ref{alg1} can be expressed as $\cO(\sum_{t=1}^{T}(Q_{t}|\bm{a}|+\sum_{i=1}^{t}Q_{i})^{3})$. This completes the proof.

%The time complexity of Hillclimbing through the algorithm is $\cO(\sum_{t=1}^{T}Q_{t}|\bm{a}|)$.  Taking sample accumulation into account throughout the iterations, the complexity of binary quadratic regression through the algorithm can be expressed as $\cO(\sum_{t=1}^{T}(\sum_{i=1}^{t}Q_{i})^{3})$.
\end{proof}

\section{Numerical Results}
\label{sec5}
Extensive numerical simulations are conducted in this section to validate the performance of our proposed algorithm. It is worth noting that the development of WDTNs is still in its early stages; there are currently no publicly available workload traces for WDTNs.
On this basis, we rely on synthetic instances generated to address the DT placement problem in our simulations.
We consider a WDTN that is deployed in a two-dimensional $120\times120~m^{2}$ area, where a specific number of edge servers and physical devices are uniformly distributed throughout this region \cite{AreaSetting}. To demonstrate the performance of the algorithm under different settings, we consider two types of scenarios with varying densities of physical entities, including physical devices and edge servers. In these scenarios, we independently vary the number of edge servers and physical devices to analyze their individual impact on the algorithm's performance.
More specifically, in the edge server changing scenarios, we examine in total of 9 independent cases where the number of edge servers, denoted by $S$, is selected from the set $\{2,3,4,5,6,7,8,9,10\}$. Moreover, two types of physical device numbers are employed: 5 physical devices and 10 physical devices, denoted as $D = 5$ and $D = 10$, respectively.
Furthermore, the number of DT components per physical device ranges from $1$ to $3$, namely, $C_{d}\in[1,3]$, where $d\in\cD$. Meanwhile, in the physical device changing scenarios, we investigate in total of 6 independent cases characterized by the number of physical devices chosen from the set $\{5,6,7,8,9,10\}$. Therein, the number of edge servers is fixed at 6. Accordingly, we consider two different ranges for the DT components, i.e., $[1,3]$ and $[1,5]$ \cite{placementModel}. Regarding the detailed features of edge servers in the foregoing mentioned scenarios, the required cost per CPU cycle $m_s$, as defined in Section \ref{sec2}, follows a normal distribution $N(\mu_s, 0.2\mu_s)$, where $\mu_s$ is uniformly distributed in the range $[1,10]$ \cite{NPhard}. The computational capacity $T_{s}$ is uniformly distributed in the range $[0.3,0.4]\times 10^{9}$ \cite{TsSetting}. Likewise, for the detailed features of physical devices, the required CPU cycles $n_{c}^{d}$ follow a normal distribution $N(\mu_{d},0.2\mu_{d})$, where $\mu_{d}$ is uniformly distributed in the range $[1, 10] \times 10^6$ \cite{NPhard}. Additionally, for each component, the bit sizes for offloading and communication, i.e., $h_{c}^{d}$ and $g_{cc'}^{d}$, are uniformly distributed within the ranges of $[100, 500]$ KB and $[50, 250]$ KB, respectively \cite{TsSetting}. The unit cost for offloading $r$ is uniformly distributed within the range $[0, 1]$. In the meantime, the convergence bound $\Delta$ is set to $0.015$, the risk factor of $\cP_{1}$ is set to $\alpha = 0.01$, and the approximating success probability given by (\ref{equFeasibleSolu}) is $P^{\Theta}_{\alpha,\varepsilon}=0.99$ \cite{difficultProblem}. Based on (\ref{equFeasibleSolu}), the risk factor of $\cP_{2}$ is set to $\varepsilon = 0.005$, and the observation number of the sampling is given by $\Theta = 1850$. Last but not least, we set $T=10$ as the maximum iteration number of our developed method. The key parameters used throughout the simulation are summarized in Table \ref{Tab2}.
\begin{table}%[ht]
    \footnotesize

    \centering
    \caption{Parameters of placement of WDTN}
   \scalebox{0.9}{ \begin{tabular}{|c|c|c|}
        \hline
        {\bf Parameters}&\multicolumn{1}{c|}{\bf numerical values}\\ \hline
        The required cost per CPU cycle $m_{s}$                     & $N(\mu_{s},0.2\mu_{s})$       \\ \hline
        The base parameter of required costs per CPU cycle $\mu_{s}$              & $U[1,10]$                     \\ \hline
        The computational capability $T_{s}$                        & $U[0.3,0.4]\times 10^{9}$    \\ \hline
        The required CPU cycles $n_{c}^{d}$                         & $N(\mu_{d},0.2\mu_{d})$       \\ \hline
        The base parameter of required CPU cycles $\mu_{d}$         & $U[1, 10] \times 10^6$        \\ \hline
        The bit size of components offloading $h_{c}^{d}$           & $U[100, 500]$                \\ \hline
        The bit size should be exchanged between components $g_{cc'}^{d}$      & $U[50, 250]$                 \\ \hline
        The unit cost for offloading $r$   & $U[0, 1]$                     \\ \hline
        The risk factor of the problem $\alpha$                     & $0.01$                        \\ \hline
        The convergence bound $\Delta$                              & $0.015$                       \\ \hline
        The risk factor of the transformed problem $\varepsilon$    & $0.005$                       \\ \hline
    \end{tabular}}
\label{Tab2}
\end{table}

%For performance comparison, assuming the proposed algorithm iterates $T$ iterations and converges,
For performance comparison, the following baselines are considered:
\begin{itemize}
  \item \textbf{Baseline 1}: %Random placement.
  In this method, we generate $T$ randomly selected states that satisfy all constraints. Among these states, the solution is determined by selecting the state that results in the minimum overall cost.
  \item \textbf{Baseline 2}:
  In this strategy, $T$ randomly selected states are optimized using the Hillclimbing algorithm. The state that incurs the minimum overall cost is considered as the solution.
  \item \textbf{Baseline 3}:
  In this scheme, the placement of DTs is designed to be as close to their physical devices as possible while still satisfying the given constraints.
\end{itemize}

In the following, we consider three performance metrics: the convergence of our proposed algorithm, the average cost per edge server, and the average number of searched states.

\subsection{Convergence Performance}

In this subsection, we evaluate the convergence performance of our improved local search algorithm under various parameter settings. We take the average cost per edge server during each iteration as the evaluation metric, as depicted in Fig. \ref{fig3} and Fig. \ref{fig4}.
Specifically, Fig. \ref{fig3} demonstrates the convergence performance of scenarios where the number of components per physical device ranges from 1 to 3, while the range of physical device numbers varies from 5 to 10. In these scenarios, the number of edge servers remains constant at 6.
On the other hand, Fig. \ref{fig4} presents the performance in scenarios where the number of components per physical device ranges from 1 to 5. Other parameter settings in Fig. \ref{fig4} are the same as those in Fig. \ref{fig3}.
Both Fig. \ref{fig3} and Fig. \ref{fig4} demonstrate that our proposed algorithm achieves convergence within a few iterations, highlighting its efficiency and suitability for real-world systems. As expected, an increase in the number of physical devices leads to higher average cost per edge server. Furthermore, it is noteworthy that our developed strategy accepts increased average costs, allowing it to surpass local optimal solutions that may have inferior performance. The superiority of our scheme over various baselines will be presented minutely in the subsequent sections.

\begin{figure}
    \centering
    \includegraphics[width=3in]{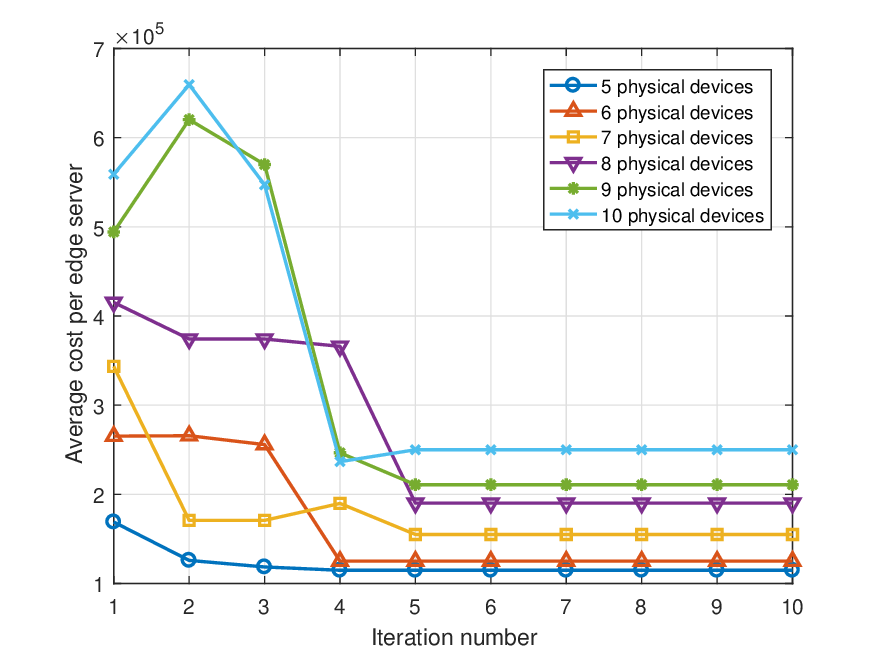}
    \caption{Convergence of average cost with a range of 1-3 components.}
    \label{fig3}
\end{figure}

\begin{figure}
    \centering
    \includegraphics[width=3in]{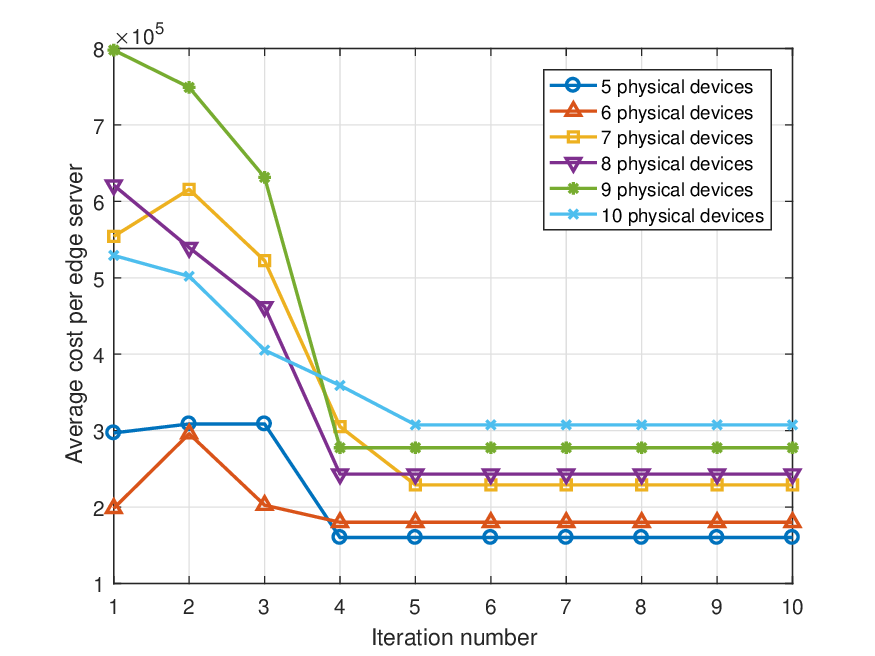}
    \caption{Convergence of average cost with a range of 1-5 components.}
    \label{fig4}
\end{figure}

\subsection{Average Cost Per edge server}

In this subsection, we analyze the average cost per edge server in two types of scenarios: physical device changing scenarios and edge server changing scenarios. Firstly, we evaluate the performance of physical device changing scenarios in Fig. \ref{fig5} and Fig. \ref{fig6}. Then, we evaluate the performance of different schemes under edge server changing scenarios in Fig. \ref{fig7} and Fig. \ref{fig8}.
Specifically, in Fig. \ref{fig5} and Fig. \ref{fig6}, the $x$-axis represents the number of physical devices, while $y$-axis depicts the average cost per edge server. In Fig. \ref{fig5}, we examine the performance where the number of components per physical device ranges from 1 to 3, the number of physical devices ranges from 5 to 10, and the number of edge servers remains constant at 6. On the other hand, Fig. \ref{fig6} demonstrates the performance where the number of components per physical device ranges from 1 to 5. The other parameters of Fig. \ref{fig6} remain the same as those of Fig. \ref{fig5}. We compare the performance of our proposed algorithm with the three baseline methods mentioned earlier.

%\begin{figure}
%    \centering
%    \includegraphics[width=3in]{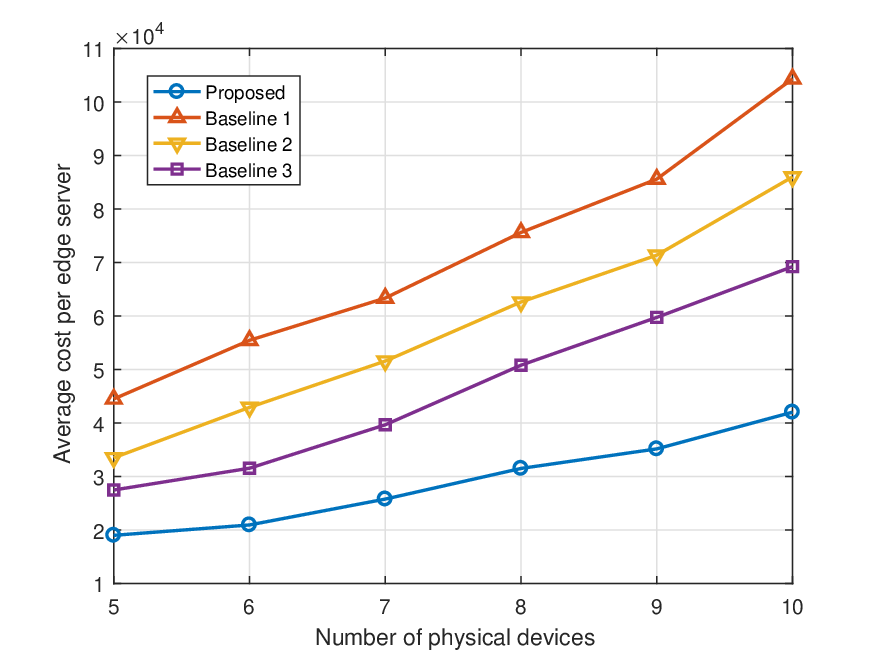}
%    \caption{Average cost per edge server versus the number of physical devices, wherein the number of edge servers is 6 and the physical device components range from 1 to 3.}
%    \label{fig5}
%\end{figure}

\begin{figure}
    \centering
    \includegraphics[width=3in]{}
    \caption{Average cost per edge server versus the number of physical devices, wherein the number of edge servers is 6 and the number of components per physical device ranges from 1 to 3.}
    \label{fig5}
\end{figure}

%\begin{figure}
%    \centering
%    \includegraphics[width=3in]{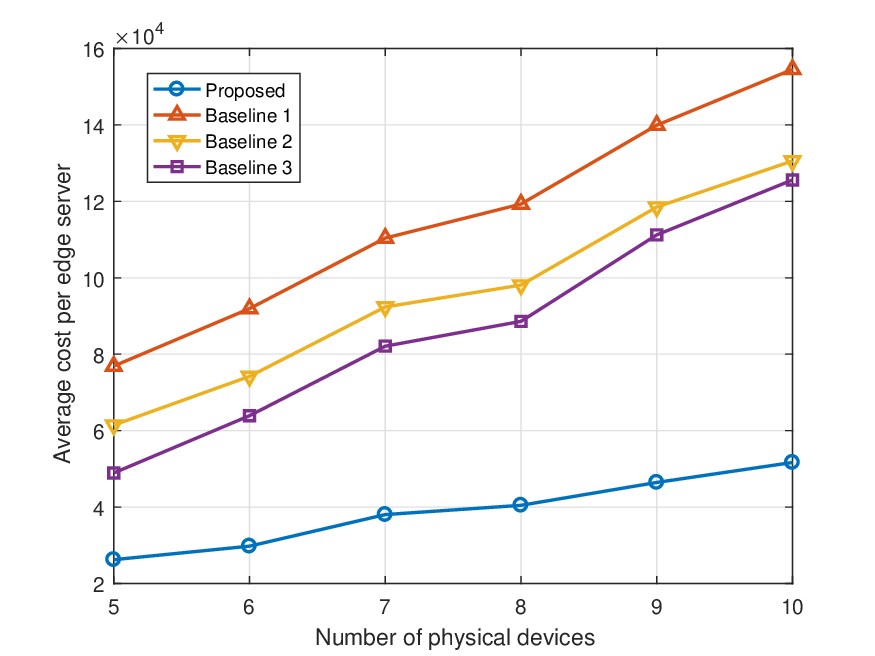}
%    \caption{Average cost per edge server versus the number of physical devices, wherein the number of edge servers is 6 and the physical device components range from 1 to 5.}
%    \label{fig6}
%\end{figure}

\begin{figure}
    \centering
    \includegraphics[width=3in]{}
    \caption{Average cost per edge server versus the number of physical devices, wherein the number of edge servers is 6 and the number of components per physical device ranges from 1 to 5.}
    \label{fig6}
\end{figure}

It can be observed from Fig. \ref{fig5} that the average cost per edge server increases as the number of physical devices increases for all compared algorithms. Among the baselines, Baseline 1, which can be considered as the starting state of Baseline 2, exhibits the poorest performance for each physical device number. Compared to Baseline 1, Baseline 2 achieves an average reduction of $19.55\%$ in average cost per edge server through its search procedure.
The reason behind this is that the conventional local search method can find a local optimal state, which is equal to or better than the starting state.
Furthermore, Baseline 3 provides an additional cost decrease of $20.37\%$. This indicates that the distance factor plays a significant role in the placement process. However, due to limitations in edge server capacities, it is not feasible to place all components into the nearest edge server. As a result, the feasible solutions obtained by this method are not optimal.
In contrast, the proposed algorithm achieves the lowest average cost compared to all the benchmark schemes. This is because the improved local search algorithm leverages regression to obtain insights into the structure of the search space, enabling the identification of a potentially better starting state for finding high-quality, near-optimal solutions. As a result, on average, the improved local search algorithm achieves an additional average cost decrease of $36.31\%$ compared to the Baseline 3. Moreover, a large number of physical devices results in a high average cost reduction.
For instance, in Fig. \ref{fig5}, compared to Baseline 3, the average cost reduction of the proposed method is $30.84\%$ in scenarios with 5 physical devices. This ratio expands to $39.30\%$ in scenarios with 10 physical devices.

In Fig. \ref{fig6}, we consider the scenario with a larger number of DT components than Fig. \ref{fig5}.
Comparing Fig. \ref{fig6} with Fig. \ref{fig5}, we can see that as the number of components per physical device increases, the average cost per edge server of all algorithms increases. Moreover, our developed scheme outperforms all the baselines significantly in terms of cost reduction. For instance, when the number of physical devices is 10, the proposed strategy saves cost by $66.56\%$, $60.43\%$, and $58.86\%$ on average, compared to Baseline 1, Baseline 2, and Baseline 3, respectively. Other similar trends to Fig. \ref{fig6} are omitted here for redundancy.

\begin{figure}
    \centering
    \includegraphics[width=3in]{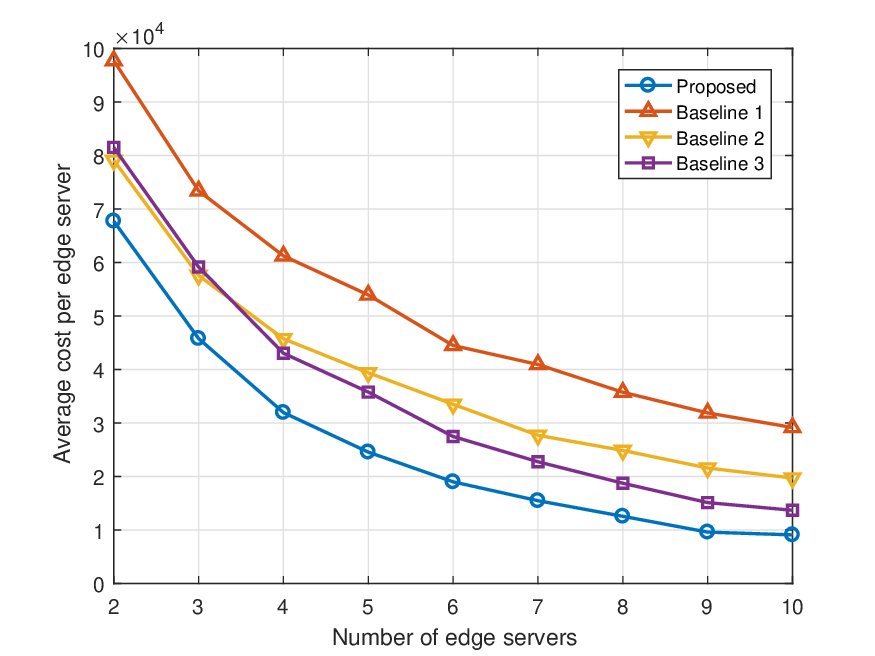}
    \caption{Average cost per edge server versus the number of edge servers, wherein the number of physical devices is 5.}
    %Average cost per edge server with 5 physical devices.}
    \label{fig7}
\end{figure}

\begin{figure}
    \centering
    \includegraphics[width=3in]{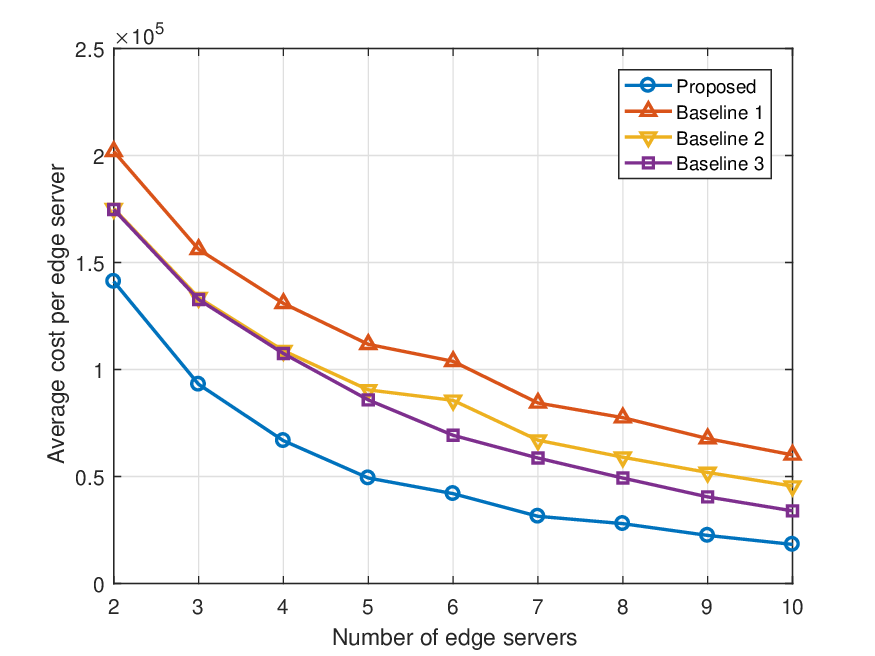}
    \caption{Average cost per edge server versus the number of edge servers, wherein the number of physical devices is 10.}
    \label{fig8}
\end{figure}

In Fig. \ref{fig7} and Fig. \ref{fig8}, we adjust the $x$-axis to represent the number of edge servers, ranging from 2 to 10, while keeping the number of physical devices constant in each figure.
Namely, the number of physical devices in Fig. \ref{fig7} is set to 5, while this parameter in Fig. \ref{fig8} is set to 10.
Additionally, the number of components per physical device in Fig. \ref{fig7} and Fig. \ref{fig8} ranges from 1 to 3.
It can be observed from both figures that the average cost per edge server decreases as the number of edge servers increases, but the rate of decrease slows down as the number of edge servers grows larger. This can be attributed to the fact that, with the number of physical devices remaining constant, the resource requirement also remains unchanged. Therefore, the trend of average cost per edge server appears as an approximate reciprocal function as the number of edge servers increases.
Moreover, among all the benchmark strategies, Baseline 1 results in the highest average cost per edge server. For example, when the number of edge servers is 2, the performance of Baseline 2 and Baseline 3 are comparable in scenarios with 5 and 10 physical devices, resulting in an overall cost decrease of $19.12\%$ and $13.35\%$, respectively, compared to Baseline 1.
The reason behind this is that when the number of edge servers is small, some edge servers face difficulties in placing all the components closest to them. As a result, these components need to be placed on other edge servers to satisfy the constraints, leading to intermediate performance.
However, as the number of edge servers increases, Baseline 3 outperforms Baseline 2.
This is because as the resources become sufficient, Baseline 3 has a higher chance of placing more components into the nearest edge servers. On the other hand, the Hillclimbing approach employed in Baseline 2 still suffers from the non-convex nature of the problem, resulting in poorer performance. Still, due to the information obtained from the regression, our proposed improved local search algorithm outperforms the early-mentioned baselines.

\subsection{Average Number of Searched States}
The average number of searched states refers to the total number of states explored by an algorithm in the search space $\cA$. For our proposed improved local search algorithm, this value is the accumulation of all data sizes $[Q_{t}]_{t = 1,2,\ldots,T}$. It is a vital criterion as it provides an indication of the scale explored by the algorithms in search spaces. In this subsection, we present the average number of searched states of our proposed algorithm and Baseline 2 in extensive scenario settings to demonstrate their effectiveness. It is important to note that Baseline 1 and Baseline 3 do not involve search procedures. Their starting states are already final solutions, which means they do not perform any exploration in the search space. Consequently, the average number of searched states for these baselines remains fixed at $T$. On this basis, we didn't present the average number of searched states for these baselines.

\begin{figure}
    \centering
    \includegraphics[width=3in]{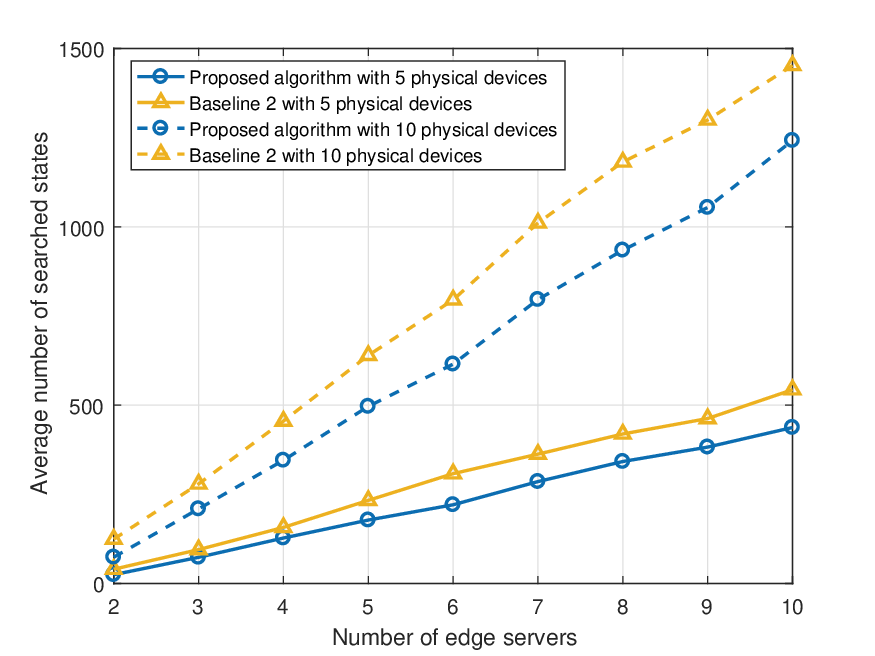}
    \caption{Average number of searched states versus the number of edge servers.}
    \label{fig9}
\end{figure}

Fig. \ref{fig9} shows the average number of searched states per scheme versus the number of edge servers. Therein, for each scheme, we consider two cases with different numbers of physical devices.
More specifically, solid lines are used to represent scenarios where the number of physical devices is 5, while dashed lines depict scenarios where the number of physical devices is 10.
It can be observed from Fig. \ref{fig9} that the average number of searched states increases almost linearly with the increasing of edge server numbers in all types of scenarios characterized by the number of physical devices. For scenarios with 10 physical devices, the increasing ratio is higher than that of scenarios with 5 physical devices. Furthermore, when combined with Fig. \ref{fig7} and Fig. \ref{fig8}, it can be seen that although the improved local search algorithm explored fewer states, its performance is better than that of Baseline 2. This is attributed to the improved local search algorithm converging after a few iterations, with its starting states being close to high-quality local-optimal states. As a result, the algorithm needs to search fewer states to find good solutions. Conversely, Baseline 2 experiences a comparable number of searched states in every iteration as it lacks insights into the structure of the search space.

\begin{figure}
    \centering
    \includegraphics[width=3in]{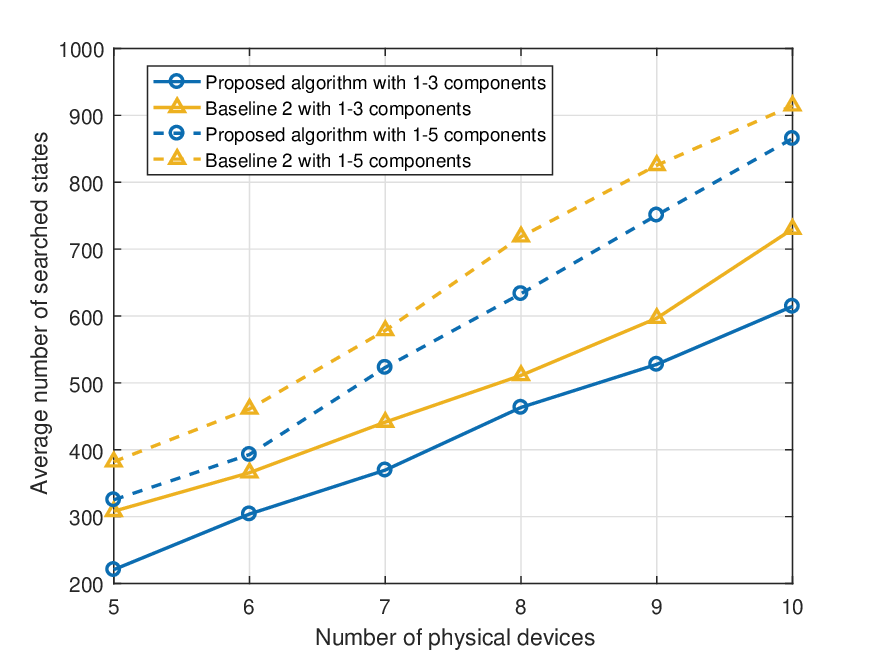}
    \caption{Average number of searched states versus the number of physical devices.}
    \label{fig10}
\end{figure}

Fig. \ref{fig10} investigates the number of physical devices on the amount of searched states per strategy, in which the number of edge servers is set to 6.
Similar to Fig. \ref{fig9}, we use solid lines to depict the number of searched states of the algorithms where the number of components per physical device ranges from 1 to 3. Correspondingly, dashed lines represent the performance of different schemes where the number of components ranges from 1 to 5.
It can be seen from Fig. \ref{fig10} that as the number of components increases, the required number of searched states also increases. Revisiting Fig. \ref{fig4} and Fig. \ref{fig5}, it can also be observed that the improved local search achieves better performance than the baseline even with less state space explored, demonstrating the effectiveness of our developed solution.

\section{Conclusions}
\label{sec6}
In this paper, we focused on the cost minimization-driven DT placement in WDTNs with the consideration of sustainability control for edge servers. To this end, we formulated the placement problem for WDTN as a chance-constrained integer programming problem. The formulated problem was challenging to solve due to its non-convexity and the feasibility checking issues. To make it tractable, we transformed the original minimization problem into an ILP and rigorously proved that the transformed problem was still NP-hard, making it difficult to obtain the optimal solution. To provide a time-efficient solution, we proposed an improved local search algorithm specifically designed to solve this ILP, which provided flexibility in balancing the time efficiency and performance guarantee. Extensive numerical results demonstrated the superiority of our developed solution compared to various baseline schemes in terms of cost savings as well as efficiency. Future work will involve studying the joint DT association and migration problems for WDTNs with accuracy consideration. Specifically, the impact of the interaction between the long-term WDTN maintenance and the performance enhancement of detailed digital twin tasks will be studied.

\bibliographystyle{IEEEtran}

\bibliography{reference}

\vspace{-1.3cm}
\begin{IEEEbiography}[{\includegraphics[width=1in,height=1.25in,clip,keepaspectratio]{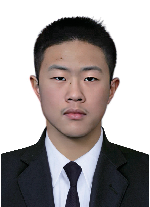}}]{Yuzhi Zhou}received his BEng (Hons.) in Communication Engineering from Liaocheng University (LCU), MSc in Information and Signal Processing from Nanjing University of Posts and Telecommunications (NJUPT) in 2019 and 2022, respectively. He is currently working toward the PhD degree at School of Science and Technology, Hong Kong Metropolitan University (HKMU). His research interests include wireless digital twin networks and their applications.
\end{IEEEbiography}

\vspace{-1.1cm}
\begin{IEEEbiography}[{\includegraphics[width=1in,height=1.25in,clip,keepaspectratio]{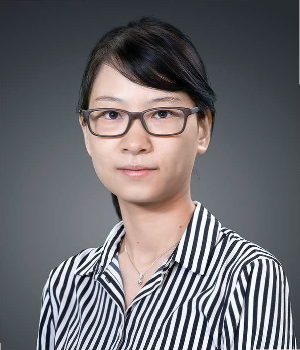}}]{Yaru Fu}(S'14-M'18)
received her Ph.D in Electronic Engineering from City University of Hong Kong (CityU) in 2018.  She is currently an Assistant Professor at the School of Science and Technology, Hong Kong Metropolitan University (HKMU). She is presently serving as an Associate Editor for the {\scshape IEEE Transactions on Cognitive Communications and Networking}, the {\scshape IEEE Internet of Things Journal}, the {\scshape IEEE Wireless Communications Letters}, the {\scshape IEEE Networking Letters}, and the {\scshape Springer Nature Computer Science}. She also serves as a Review Editor for the {\scshape Frontiers in Communications \& Networks}, a Guest Editor for the {\scshape Space: Science \& Technology}, and a Leading Guest Editor for the {\scshape Electronics} and the {\scshape IEEE Internet of Things Journal}.

Dr.\ Fu was honored with the 2021 Katie Shu Sui Pui Charitable Trust - Outstanding Research Publication Award (Gold Prize), 2022 Best Editor Award for {\scshape IEEE Wireless Communications Letters}, 2022 Katie Shu Sui Pui Charitable Trust - Excellent Research Publication Award, 2022 Exemplary Reviewer for the {\scshape IEEE Transactions on Communications} (fewer than 5\%), and 2023 President's Research Excellence Award. She was listed on the World's Top 2\% Scientists 2023 ranking by Stanford University in the United States. Her research interests include intelligent wireless communications and networking, mobile edge computing, and digital twins.
\end{IEEEbiography}

\vspace{-1.1cm}
\begin{IEEEbiography}[{\includegraphics[width=1in,height=1.25in,clip,keepaspectratio]{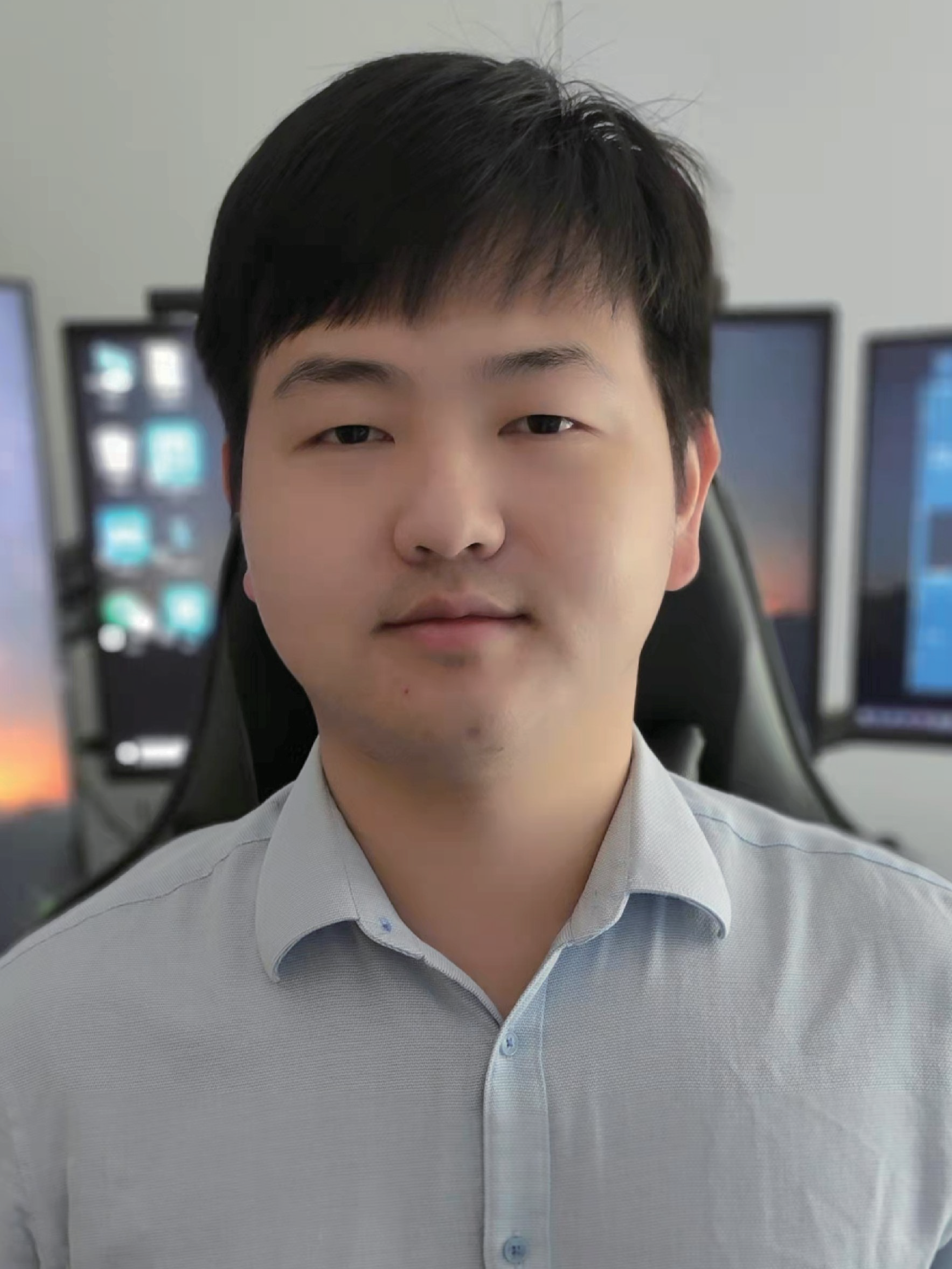}}]{Zheng Shi}obtained his Ph.D. degree in Electrical and Computer Engineering from University of Macau, Macao, in 2017. He is currently an Associate Professor with the School of Intelligent Systems Science and Engineering, Jinan University, Zhuhai, China. His current research interests include hybrid automatic repeat request, non-orthogonal multiple access, short-packet communications, intelligent reflecting surface, and Internet of Things.
\end{IEEEbiography}

\vspace{-1.1cm}
\begin{IEEEbiography}[{\includegraphics[width=1in,height=1.25in,clip,keepaspectratio]{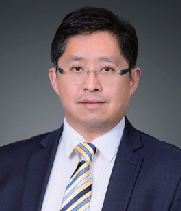}}]{Kevin Hung}is currently Associate Professor and the Head of Department of Electronic Engineering $\&$ Computer Science at the School of Science $\&$ Technology, Hong Kong Metropolitan University (HKMU). He is the principal investigator of several projects funded by the government and the university. Prior to joining HKMU, Dr. Hung was Assistant Project Manager at the Joint Research Centre for Biomedical Engineering at The Chinese University of Hong Kong (CUHK). He also worked as an Electronic Engineer at a medical device company. Dr. Hung's academic credentials include a B.Sc. from Queen’s University in Canada, and both an M.Phil. and Ph.D. from CUHK. His research interests include biosignal processing, quantum machine learning, biosystem simulation, and mobile health. Dr. Hung is currently serving as the Vice Chair of IEEE Hong Kong Section, Immediate Past Chair of the Electronics and Communications Section of IET Hong Kong, committee member of IET Hong Kong Branch, and Honorary Secretary of the Chinese Institute of Electronics (Hong Kong). He is also a founding officer of the IEEE Engineering in Medicine and Biology Society (EMBS) Hong Kong – Macau Joint Chapter, and served as its Chair in 2010. He is the founding Counsellor of the IEEE HKMU Student Branch.
\end{IEEEbiography}

\vspace{-1.1cm}
\begin{IEEEbiography}[{\includegraphics[width=1in,height=1.25in,clip,keepaspectratio]{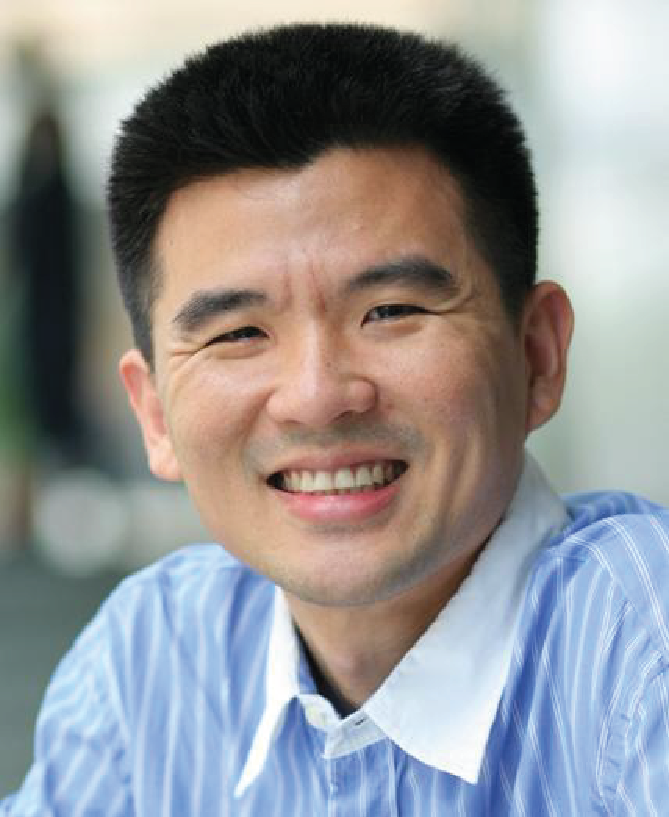}}]{Tony Q.S. Quek}(S'98-M'08-SM'12-F'18) received the B.E.\ and M.E.\ degrees in electrical and electronics engineering from the Tokyo Institute of Technology in 1998 and 2000, respectively, and the Ph.D.\ degree in electrical engineering and computer science from the Massachusetts Institute of Technology in 2008. Currently, he is the Cheng Tsang Man Chair Professor with Singapore University of Technology and Design (SUTD) and ST Engineering Distinguished Professor. He also serves as the Director of the Future Communications R\&D Programme, the Head of ISTD Pillar, and the AI on RAN Working Group Chair in AI-RAN Alliance. His current research topics include wireless communications and networking, network intelligence, non-terrestrial networks, open radio access network, and 6G.

Dr.\ Quek has been actively involved in organizing and chairing sessions, and has served as a member of the Technical Program Committee as well as symposium chairs in a number of international conferences. He is currently serving as an Area Editor for the {\scshape IEEE Transactions on Wireless Communications}.

Dr.\ Quek was honored with the 2008 Philip Yeo Prize for Outstanding Achievement in Research, the 2012 IEEE William R. Bennett Prize, the 2015 SUTD Outstanding Education Awards -- Excellence in Research, the 2016 IEEE Signal Processing Society Young Author Best Paper Award, the 2017 CTTC Early Achievement Award, the 2017 IEEE ComSoc AP Outstanding Paper Award, the 2020 IEEE Communications Society Young Author Best Paper Award, the 2020 IEEE Stephen O. Rice Prize, the 2020 Nokia Visiting Professor, the 2022 IEEE Signal Processing Society Best Paper Award, and the 2024 IIT Bombay International Award For Excellence in Research in Engineering and Technology. He is an IEEE Fellow, a WWRF Fellow, and a Fellow of the Academy of Engineering Singapore.
\end{IEEEbiography}

\vspace{-9cm}
\begin{IEEEbiography}[{\includegraphics[width=1in,height=1.25in,clip,keepaspectratio]{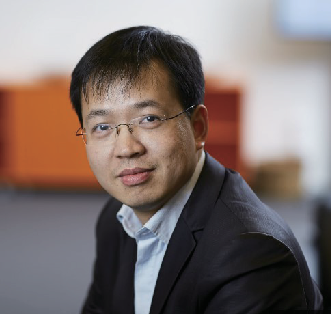}}]{Yan Zhang}(IEEE Fellow’20) is currently a Full Professor with the Department of Informatics, University of Oslo, Norway. He received the Ph.D. degree from the School of Electrical and Electronics Engineering, Nanyang Technological University, Singapore. His research interests include next-generation wireless networks leading to 6G, green and secure cyber-physical systems. Dr. Zhang is an Editor (or Area Editor, Senior Editor, Associate Editor) for several IEEE transactions/magazine. Since 2018, Prof. Zhang was a recipient of the global “Highly Cited Researcher” Award (Web of Science top $1\%$ most cited worldwide). He is Fellow of IEEE, Fellow of IET, elected member of Academia Europaea (MAE), elected member of the Royal Norwegian Society of Sciences and Letters (DKNVS), and elected member of Norwegian Academy of Technological Sciences (NTVA).
\end{IEEEbiography}

\end{document}